\newtheorem{theorem}{Theorem}
\newtheorem{proposition}[theorem]{Proposition}
\newtheorem{lemma}[theorem]{Lemma}
\newtheorem{definition}[theorem]{Definition}
 \newcommand{\us}{\ensuremath{{\overline{s}}}}
 \newcommand{\ls}{\ensuremath{{\underline{s}}}}
\newcommand{\poly}{\ensuremath{\textnormal{poly}}}
\newcommand{\occ}{\ensuremath{{\rm{occ}}}}
 \newcommand{\ignore}[1]{}
\title{Unsatisfiable CNF Formulas need many Conflicts
  \thanks{Research is  supported by the SNF Grant 200021-118001/1}
}
\author{Dominik Scheder\\ 
  ETH Z\"urich\\
  \texttt{dscheder@inf.ethz.ch} \and
  Philipp Zumstein\\
  ETH Z\"urich\\
  \texttt{zuphilip@inf.ethz.ch}}
\begin{document}

\maketitle

\begin{abstract} A pair of clauses in a CNF formula constitutes a
  conflict if there is a variable that occurs positively in one clause
  and negatively in the other. A CNF formula without any conflicts is
  satisfiable. The Lov\'asz Local Lemma implies that a $k$-CNF formula
  is satisfiable if each clause conflicts with at most
  $\frac{2^k}{e}-1$ clauses. It does not, however, give any good bound
  on how many conflicts an unsatisfiable formula has globally. We show
  here that every unsatisfiable $k$-CNF formula requires
  $\Omega(2.69^k)$ conflicts and there exist unsatisfiable $k$-CNF
  formulas with $O(3.51^k)$ conflicts.
\end{abstract}

\section{Introduction}

A boolean formula in conjunctive normal form, a {\em CNF formula} for
short, is a conjunction (AND) of {\em clauses}, which are disjunctions
(OR) of literals. A {\em literal} is either a boolean variable $x$ or
its negation $\bar{x}$.  We assume that a clause does neither contain
the same literal twice nor a variable and its negation. A CNF formula
where each clause contains exactly $k$ literals is called a $k$-CNF
formula.  Satisfiability, the problem of deciding whether a CNF
formula is satisfiable, plays a major role in computer science.  How
can a $k$-CNF formula be unsatisfiable?  If $k$ is large, each clause
is extremely easy to satisfy individually.  However, it can be that
there are conflicts between the clauses, making it impossible to
satisfy all of them simultaneously.  If a $k$-CNF formula is
unsatisfiable,
then we expect that there are many conflicts.\\

To give a formal setup, we say two clauses {\em conflict} if there is
at least one variable that appears positively in one clause and
negatively in the other. For example, the two clauses $(x \vee y)$ and
$(\bar{x} \vee u)$ conflict, as well as $(x \vee y)$ and $(\bar{x}
\vee \bar{y})$ do.  Any CNF formula without the empty clause and
without any conflicts is satisfiable.  For a formula $F$ we define the
{\em conflict graph} $CG(F)$, whose vertices are the clauses of $F$,
and two clauses are connected by an edge if they conflict. $\Delta(F)$
denotes the maximum degree of $CG(F)$, and $e(F)$ the number of
conflicts in $F$, i.e., the number of edges in $CG(F)$.  In fact, any
$k$-CNF formula is satisfiable unless $\Delta(F)$ and $e(F)$ are
large. A quantitative result follows from the lopsided Lov\'asz Local
Lemma~\cite{ES1991,AS2000,LS2007}: A $k$-CNF formula $F$ is
satisfiable unless some clause conflicts with $\frac{2^k}{e}$ or more
clauses, i.e., unless $\Delta(F) \geq \frac{2^k}{e}$. Up to a constant
factor, this is tight: Consider the formula containing all $2^k$
clauses over the variables $x_1, \dots, x_k$, the {\em complete
  $k$-CNF formula} which we denote by $\mathcal{K}_k$.  It is
unsatisfiable, and
$\Delta(\mathcal{K}_k)=2^k-1$.\\

As its name suggests, the lopsided Lov\'asz Local Lemma implies a {\em
local} result.
Our goal is to obtain a {\em global} result: $F$ is satisfiable unless the total number of
conflicts is {\em very} large. We define two functions
\begin{eqnarray*}
  lc(k) & := & \max \{d \in \mathbb{N}_0 \ | \ \textnormal{ every }
  k\textnormal{-CNF formula } F \textnormal{ with }
  \Delta(F) \leq d \textnormal{ is satisfiable} \} \ ,\\
  gc(k) & := & \max \{e \in \mathbb{N}_0 \ | \ \textnormal{ every }
  k\textnormal{-CNF formula } F \textnormal{ with }
  e(F) \leq e \textnormal{ is satisfiable} \} \ .
\end{eqnarray*}
The abbreviations $lc$ and $gc$ stand for {\em local conflicts} and
{\em global conflicts}, respectively. From the above discussion,
$\frac{2^k}{e}-1 \leq lc(k) \leq 2^k-2$, hence we know $lc(k)$ up to a
constant factor. In contrast, it does not seem to be easy to prove
nontrivial upper and lower bounds on $gc(k)$.  Certainly, $gc(k) \geq
lc(k) \geq \frac{2^k}{e}-1$ and $gc(k) \leq e(\mathcal{K}_k)-1 =
\binom{2^k}{2}-1$.  Ignoring constant factors, $gc(k)$ lies somewhere
between $2^k$ and $4^k$.  This leaves much space for improvement.
In~\cite{SZ2008}, we proved that $gc(k) \in \Omega(2.27^k)$ and $gc(k)
\leq \frac{4^k}{\log^3 k}{k}$. In this paper, we improve upon these
bounds.  Surprisingly, $gc(k)$ is \emph{exponentially} smaller than
$4^k$.

\begin{theorem}
  Any unsatisfiable $k$-CNF formula contains
  $\Omega\left(2.69^k\right)$ conflicts. There are unsatisfiable
  $k$-CNF formulas with $O\left(3.51^k\right)$ conflicts.
\label{main}
\end{theorem}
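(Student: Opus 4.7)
The theorem has two halves, and I would attack them separately.

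For the lower bound $e(F) = \Omega(2.69^k)$, I would refine the probabilistic / lopsided-LLL argument that gave $\Omega(2.27^k)$ in \cite{SZ2008}. Fix an unsatisfiable $k$-CNF formula $F$ and set $e := e(F)$. The baseline observation is that $2 e = \sum_{C} \deg_{CG(F)}(C)$, so it is enough to prove that many clauses have large conflict-degree. I would fix a threshold $d$ and split $F$ into a low-degree part $F_L = \{C : \deg(C) < d\}$ and a high-degree remainder $F_H$. The lopsided local lemma applied to $F_L$ alone yields that $F_L$ is satisfiable as soon as $d \leq 2^k/\e$. Since $F$ is not, every satisfying assignment of $F_L$ must falsify at least one clause of $F_H$. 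A single witness only forces $|F_H| \geq 1$, so to obtain a strong bound I would randomize: take a biased random assignment (each variable independently set to $1$ with probability $p$), restrict, and then apply the lopsided local lemma to the residual formula. The expected number of clauses that remain ``dangerous'' after the restriction phase gives a lower bound on $|F_H|$, and jointly optimising the bias $p$ together with the threshold $d$ should yield the base $2.69$.

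For the upper bound, the goal is to construct an unsatisfiable $k$-CNF with few conflicts. The naive candidate $\mathcal{K}_k$ has $\binom{2^k}{2} = \Theta(4^k)$ conflicts, and the earlier construction of \cite{SZ2008} shaves off only a polylogarithmic factor. To obtain an exponential improvement to $O(3.51^k)$, I would use a product-and-pad construction: take an unsatisfiable $j$-CNF $G$ of smaller width, spread out on many variables so that the conflict-degree per clause is small, and pad each clause of $G$ by appending fresh literals up to width $k$, using a carefully chosen collection of signatures per base clause so that unsatisfiability is preserved. The total number of conflicts in the padded formula is the sum of (i) conflicts between padded copies of the same base clause and (ii) conflicts between padded copies of different base clauses; both terms can be balanced by tuning the number of signatures together with the parameter $j$ and the number of auxiliary variables.

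The main difficulty, in both halves, is the joint optimisation over the free parameters. For the lower bound, a single LLL application yields only $\Omega(2^k)$; boosting it to $\Omega(2.69^k)$ requires using the heavy clauses in a non-trivial second stage, probably through a two-phase random restriction or through the asymmetric form of the local lemma with clause-dependent weights. For the upper bound, the obstacle is controlling the internal conflicts between padded copies of a single base clause: the naive choice of using all $2^{k-j}$ signatures produces $\Omega(4^{k-j})$ conflicts per base clause and wipes out any potential gain. The construction must employ a sparser set of signatures that still blocks every extension to the fresh variables, and it is this balance between unsatisfiability and sparsity that I expect to be the crux of the argument.
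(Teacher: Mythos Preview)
Your proposal is thoughtful but misses the actual mechanisms on both sides; as written it would not reach $2.69$ or $3.51$.

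\textbf{Lower bound.} Splitting $F$ by clause-degree and applying the local lemma to the low-degree part is a natural opening move, but it only tells you that at least one high-degree clause exists, and pushing this via a single random restriction does not obviously buy an exponential improvement over $2^k$. The paper does something structurally different. It assigns each variable a \emph{variable-dependent} probability
\[
p(x)=\max\Bigl\{\tfrac{1}{2},\ \bigl(\occ_F(x)/(16\,e(F))\bigr)^{1/k}\Bigr\},
\]
so that frequently-occurring literals are favoured. Clauses are then classified as \emph{good} (all literals have $p(u)\ge\frac12$) or \emph{bad}; a direct counting argument shows that the bad clauses together contribute at most $\frac18$ to the lopsided-LLL sum. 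The heart of the proof is a \emph{sparsification process} applied to the good clauses: while some literal $u$ has $\sum_{D\ni u}\Pr[\alpha\not\models D]>\frac{1}{8k}$, delete $u$ from the clause where this probability is largest. If the process never creates the empty clause, the LLL fires and $F$ is satisfiable. Analysing the clause that gets emptied yields, for every $\ell\le k$, an inequality of the form $\frac{1}{128k\,e(F)}\le q_\ell^k\prod_{i\le\ell}(1-q_i)$, which forces $q_j=f_k(q_{j+1})$ with $f_k(t)=t(1-t)^{1/k}$. The constant $2.69\approx 2^{2-0.572}$ comes from the convergence rate of this iterated map, namely $\lim_k S_k/k=-\int_{1/2}^{1}\frac{dx}{x\ln(1-x)}<0.572$. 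None of these ingredients---occurrence-dependent probabilities, literal-deletion sparsification, or the iterated-map analysis---appears in your sketch, and I do not see how a single bias $p$ and a single degree threshold $d$ could reproduce that integral.

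\textbf{Upper bound.} Your ``product-and-pad'' intuition is in the right direction, but you have mis-identified the crux. The paper \emph{does} use the naive full padding (all $2^{k-\ell}$ sign patterns on fresh variables), and the resulting $4^{k-\ell}$ internal conflicts per short clause are perfectly acceptable because the number of short clauses is tiny. The actual idea is to start not from an arbitrary $j$-CNF but from a monotone $(\ell,k)$-CNF: $|F^+|$ purely positive $\ell$-clauses together with $|F^-|$ purely negative $k$-clauses, sampled uniformly at random over $k^2$ variables (an Erd\H{o}s-style non-$2$-colorability argument). One gets unsatisfiability with $|F^+|\le ck^2(1-\rho)^{-\ell}$ and $|F^-|\le ck^2\rho^{-k}$. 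After $k$-CNFification the total conflict count is at most $4^{k-\ell}|F^+|+2^{k-\ell}|F^+|\,|F^-|$, and optimising $\rho\approx 0.63$, $\ell\approx 0.333k$ gives $O(3.51^k)$. The leverage comes from the asymmetry between the two clause lengths and the randomised clause selection, not from a clever sparse set of padding signatures.
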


We obtain the lower bound by a more sophisticated application of the
idea we used in~\cite{SZ2008}. The upper bound follows from a
construction that is partially probabilistic, and inspired in parts by
Erd\H{o}s' construction in~\cite{Erdos1964} of sparse $k$-uniform
hypergraphs that are not $2$-colorable. To simplify notation, we view
formulas as sets of clauses, and clauses as sets of literals. Hence,
$|F|$ denotes the number of clauses in $F$. Still, we will sometimes
find it convenient to use the more traditional logic notation.

\subsection*{Related Work}

Let $F$ be a CNF formula and $u$ be a literal. We define $\occ_F(u) :=
| \{C \in F \ | \ u \in C\}|$. For a variable $x$, we write $d_F(x) =
\occ_F(x) + \occ_F(\bar{x})$ and call it the {\em degree} of $x$.
We write $d(F) = \max_x d_F(x)$. It is easy to see
that for a $k$-CNF formula, $\Delta(F) \leq k(d(F)-1)$. Define
$$
f(k)  :=  \max \{d \in \mathbb{N}_0 \ | \ \textnormal{ every }
k\textnormal{-CNF formula } F \textnormal{ with }
d(F) \leq d \textnormal{ is satisfiable} \} \ .\\
$$ 
By an
application of Hall's Theorem, Tovey~\cite{Tovey1984} showed that every
$k$-CNF formula $F$ with $d(F) \leq k$ is satisfiable, hence $f(k)
\geq k$. Later, Kratochv\'il, Savick\'y and Tuza~\cite{KST1993} showed
that $f(k) \geq \frac{2^k}{ek}$ and
$f(k) \leq 2^{k-1} - 2^{k-4} - 1$. The upper bound was improved by
Savick\'{y} and Sgall~\cite{SS2000} to $f(k) \in O(k^{-0.26}2^k)$, by
Hoory and Szeider~\cite{HS2006} to $f(k) \in
O\left(\frac{\log(k)2^k}{k}\right)$, and recently by
Gebauer~\cite{Gebauer2009} to $f(k) \leq \frac{2^{k+2}}{k}-1$,
closing the gap between lower and upper bound on $f(k)$ up to a
constant factor. Actually, we used the formulas constructed in~\cite{HS2006}
to prove the upper bound $gc(k) \leq
\frac{4^k}{\log^3 k}{k}$ in~\cite{SZ2008}.\\

\section{A First Attempt} \label{section-first-attempt}

We sketch a first attempt on proving a nontrivial lower bound on
$gc(k)$. Though this attempt does not succeed, it leads us to other
interesting questions, results, and finally proof methods which can be
used to prove a lower bound on $gc(k)$.  Let $F$ be a $k$-CNF formula and
$x$ a variable. Every clause containing $x$ conflicts with every
clause containing $\bar{x}$, thus $e(F) \geq \occ_F(x) \cdot
\occ_F(\bar{x})$. Furthermore,
\begin{eqnarray}
  e(F) \geq \frac{1}{k} \sum_x \occ_F(x) \cdot \occ_F(\bar{x})\,,
\end{eqnarray}
where the $\frac{1}{k}$ comes from the fact that each conflict might
be counted up to $k$ times, if two clauses contain several
complementary literals.  Every unsatisfiable $k$-CNF formula $F$
contains a variable $x$ with $d_F(x) \geq \frac{2^k}{ek}$.  If this
variable is {\em balanced}, i.e., $\occ_F(x)$ and $\occ_F(\bar{x})$
differ only in a polynomial factor in $k$, then $e(F) \geq
\frac{4^k}{\poly(k)}$. Indeed, in the formulas constructed
in~\cite{Gebauer2009}, all variables are balanced. The same holds for
the complete $k$-CNF formula $\mathcal{K}_k$.  It follows that when
trying to obtain an upper bound on $gc(k)$ that is exponentially
smaller than $4^k$, we should construct a very {\em unbalanced}
formula.  We ask the following question:

\begin{quotation}
  {\em Question:} Is there a number $a>1$ such that for every
  unsatisfiable $k$-CNF formula $F$ there is a variable with
  $\occ_F(x) \geq a^k$ and $\occ_F(\bar{x}) \geq a^k$?
\end{quotation}

The answer is a very
strong no:
In~\cite{SZ2008} we gave a simple inductive construction of a $k$-CNF
formula $F$ with $\occ_F(\bar{x})\leq 1$ for every variable $x$.
However, in this formula one has $\occ_F(x) \approx k!$.
Allowing $\occ_F(\bar{x})$ to be a small exponential in $k$, we have
the following result:

\begin{theorem}
 
  \begin{itemize}
  \item[(i)] For every $a > 1$, $b \geq \frac{a}{a-1}$ there is a
    constant $c$ such that for all sufficiently large $k$, there is an
    unsatisfiable $k$-CNF formula $F$ with $\occ_F(\bar{x}) \leq
    ck^2a^k$ and $\occ_F(x) \leq ck^2b^k$, for all $x$.
  \item[(ii)] Let $1 < a < \sqrt{2}$ and $b =
    \sqrt{\frac{a^4}{a^2-1}}$. Then every $k$-CNF formula $F$ with
    $\occ_F(x) \leq \frac{b^k}{8k}$ and 
    $\occ_F(\bar{x}) \leq \frac{a^k}{8k}$ is
    satisfiable.
  \end{itemize}
  \label{tradeoff-exponential}
\end{theorem}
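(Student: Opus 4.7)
The proof has two parts: a construction for (i) and a probabilistic argument for (ii).

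For part (i), we build $F_k$ inductively. The recursive step takes two unsatisfiable $(k-1)$-CNFs $F'$ and $F''$ on disjoint variable sets and forms $F_k = \{C \vee x_k : C \in F'\} \cup \{C \vee \bar{x}_k : C \in F''\}$ over a fresh variable $x_k$. This formula is unsatisfiable because setting $x_k$ to either value leaves one of $F', F''$ intact. Crucially, $\occ_{F_k}(x_k) = |F'|$ and $\occ_{F_k}(\bar{x}_k) = |F''|$, while old-variable occurrences inherit unchanged because the variable sets are disjoint. By tuning $|F'|$ and $|F''|$ so that they stand in ratio approximately $b : a$ at each level---using multiple copies of smaller sub-formulas on each side---we push positive occurrences of each new variable up to $\approx b^k$ while keeping its negative occurrences $\approx a^k$. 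The condition $b \geq a/(a-1)$, equivalently $\tfrac{1}{a} + \tfrac{1}{b} \leq 1$, is exactly the feasibility constraint for the recurrence $|F_k| = |F'| + |F''|$ to stay within $ck^2 b^k$ at every level simultaneously with $\occ_F(\bar{x}) \leq ck^2 a^k$; the polynomial factor $ck^2$ absorbs losses from the copy counts having to be integers.

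For part (ii), we apply the lopsided Lov\'asz Local Lemma under a biased random assignment setting $\Pr[x = 1] = p = 1/a^2$ independently for each variable; by the relation $b^2 = a^4/(a^2-1)$ this yields $p(1-p) = 1/b^2$. A clause $C$ with $j$ positive and $k-j$ negative literals is falsified with probability $(1-p)^j p^{k-j}$, and conflicts with at most $j\cdot a^k/(8k) + (k-j)\cdot b^k/(8k)$ other clauses, since each positive literal $x$ of $C$ has $\occ_F(\bar{x}) \leq a^k/(8k)$ conflict-neighbors and each negative literal $\bar{x}$ has $\occ_F(x) \leq b^k/(8k)$. We apply the asymmetric LLL with weights $\mu_C$ depending on the literal composition of $C$ and bound $\sum_{C' \sim C}\mu_{C'}$ by decomposing the sum across positive and negative literals of $C$. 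With $p = 1/a^2$, the contribution through a positive literal is at most $p\cdot\occ_F(\bar{x})\cdot p^{k-1} = (pa)^k/(8k) = 1/(8k\,a^k)$, which is negligible, while each negative literal contributes at most $(1-p)\cdot\occ_F(x)\cdot p^{k-1} = (a^2-1)b^k/(8k\,a^{2k})$. The identity $b^2(a^2-1) = a^4$ is precisely what simplifies this last quantity so that, after summing over all $k$ literal positions and applying the asymmetric LLL, the sum condition $\sum_{C'\sim C}\mu_{C'}\leq 1/e$ is met.

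The main obstacle is the LLL calculation in part (ii): a naive symmetric application fails because the negative-literal contribution grows like $(b/a^2)^{k-2}$, which diverges in the regime $a < \sqrt{2}$ where $b > a^2$. One must use the asymmetric LLL with weights tailored to each clause's literal composition, and the specific relation $b^2 = a^4/(a^2-1)$ arises as the balance point where the positive- and negative-literal contributions meet the LLL threshold. The factor $\tfrac{1}{8k}$ in the hypothesis absorbs the LLL constant $e$ and the polynomial loss from summing over $k$ literal positions. For part (i) the obstacle is more combinatorial---verifying that the recurrence closes under the bounds $ck^2 a^k$ and $ck^2 b^k$ at the critical ratio $\tfrac1a+\tfrac1b=1$---but this follows by a direct computation once the inductive scheme is set up.
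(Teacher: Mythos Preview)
Both parts of your proposal have genuine gaps.

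\textbf{Part (i).} Your recursive construction cannot achieve $\occ_F(\bar{x}) \leq ck^2 a^k$ for any $a < 2$. In the step $F_k = \{C \vee x_k : C \in F'\} \cup \{C \vee \bar{x}_k : C \in F''\}$ you have $\occ_{F_k}(\bar{x}_k) = |F''|$, and $F''$ is an unsatisfiable $(k-1)$-CNF formula. But every unsatisfiable $(k-1)$-CNF formula has at least $2^{k-1}$ clauses (each clause rules out only a $2^{-(k-1)}$ fraction of assignments), so $\occ_{F_k}(\bar{x}_k) \geq 2^{k-1}$, which exceeds $ck^2 a^k$ for large $k$ whenever $a < 2$. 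Taking ``multiple copies'' cannot help: it can only make $|F''|$ larger. The paper's approach is entirely different and avoids this obstruction: it builds a random \emph{monotone} $k$-CNF formula (positive $k$-clauses and negative $k$-clauses sampled independently over $k^2$ variables) and shows it is unsatisfiable with high probability. Setting each variable true with probability $\rho = 1/a$ gives the bounds $|F^-| \leq ck^2 a^k$ and $|F^+| \leq ck^2 (a/(a-1))^k \leq ck^2 b^k$, and since negative (resp.\ positive) literals occur only in $F^-$ (resp.\ $F^+$), the occurrence bounds follow immediately.

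\textbf{Part (ii).} Your LLL calculation does not close, and the gap is exactly the divergence you yourself flag. You compute the contribution through a negative literal as $(a^2-1)b^k/(8k\,a^{2k})$; using $a^2-1 = a^4/b^2$ this equals $(b/a^2)^{k-2}/(8k)$, which tends to infinity because $b > a^2$ for $a < \sqrt{2}$. Invoking ``asymmetric LLL with weights depending on literal composition'' is not enough: you never specify the weights, and the calculation you do sketch reproduces the same divergent term. The paper's key idea, which you are missing, is a \emph{truncation}: from every clause with at least $k/2$ negative literals, delete all positive literals. In the truncated formula, every positive literal $y$ sits in a clause with at least $k/2$ positive literals, hence with failure probability at most $(p(1-p))^{k/2} = b^{-k}$. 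Now a negative literal $\bar{y}$ in $C$ conflicts (via $y$) only with such clauses, and the sum over them is at most $\frac{b^k}{8k}\cdot b^{-k} = \frac{1}{8k}$; positive literals in $C$ conflict with at most $\frac{a^k}{8k}$ clauses of failure probability $\leq p^{k/2} = a^{-k}$. Summing over the $k$ literals gives $\frac{1}{4}$, and the LLL (in the form of Lemma~\ref{corollary-llll}) applies to the truncated formula, hence to $F$.
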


Of course, we can interchange the
roles of $x$ and $\bar{x}$, but it is convenient to assume that
$\occ_F(\bar{x}) \leq \occ_F(x)$ for every $x$.
In the spirit of these
results, we might suspect that if $F$ is unsatisfiable, then for some
variable $x$, the product  $\occ_F(x) \cdot \occ_F(\bar{x})$ is large.

\begin{quotation}
  {\em Question:} Is there a number $a > 2$ such that every unsatisfiable
  $k$-CNF formula contains a variable $x$ with 
  $\occ_F(x) \cdot \occ_F(\bar{x}) \geq a^k$?
\end{quotation}

Clearly, $gc(k) \geq a^k$ for any such number $a$. The complete
$k$-CNF formula witnesses that $a$ cannot be greater than $4$, and it
is not at all easy to come up with an unsatisfiable $k$-CNF formula
where $\occ_F(x)\cdot\occ_F(\bar{x})$ is exponentially smaller than
$4^k$ for every $x$. We cannot answer the above question, but we
suspect that the answer is yes. We prove an upper bound on the
possible value of $a$:

\begin{theorem}
  There are unsatisfiable $k$-CNF formulas with $\occ_F(x) \cdot
  \occ_F(\bar{x}) \in O(3.01^k)$ for all variables $x$.
\label{theorem-individual}
\end{theorem}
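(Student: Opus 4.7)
The plan is to adapt the probabilistic construction underlying the $O(3.51^k)$ upper bound in Theorem~\ref{main}, re-optimizing its parameters so as to control the per-variable occurrence product rather than the total conflict count. The construction takes a large number $n$ of variables (growing exponentially in $k$) together with a collection of random $k$-clauses drawn from a structured distribution, inspired by Erd\H{o}s' construction of sparse non-$2$-colorable $k$-uniform hypergraphs; the parameters to tune are $n$, the number of clauses $m$, and a bias $p$ controlling how often each literal is positive versus negative.

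Two probabilistic arguments then have to go through simultaneously. A union bound over all $2^n$ truth assignments shows that with positive probability no assignment satisfies the sampled formula $F$, giving unsatisfiability. A second union bound, this time over the $n$ variables, combined with Chernoff-type concentration of $\occ_F(x)$ and $\occ_F(\bar x)$ around their expectations, shows that for every $x$ the product $\occ_F(x)\cdot\occ_F(\bar x)$ is close to $\mathbb{E}[\occ_F(x)]\cdot\mathbb{E}[\occ_F(\bar x)]$. One then chooses $(n,m,p)$ to minimize the worst-case expected product subject to the unsatisfiability constraint, which should yield the claimed $O(3.01^k)$ bound.

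The main technical obstacle is that the naive random $k$-CNF---each literal independently positive/negative with probability $1/2$ and each clause over a uniformly random $k$-subset of variables---yields a per-variable product of order $\Theta(4^k)$, no better than the complete formula $\mathcal{K}_k$. To break the $4^k$ barrier one must introduce genuine structure: either correlations among the literals inside a clause, as in the Erd\H{o}s-style hypergraph design, or deliberate asymmetry in the sign distribution in the spirit of Theorem~\ref{tradeoff-exponential}(i), so that the expected product $\mathbb{E}[\occ_F(x)]\cdot \mathbb{E}[\occ_F(\bar x)]\sim p(1-p)(km/n)^2$ drops below its symmetric value. The delicate step is to analyze the joint effect of these structural choices on both the per-variable product and on $\min_\alpha \Pr[\alpha \text{ falsifies a random clause}]$, the quantity that controls the unsatisfiability union bound; the resulting two-parameter optimization is what fixes the constant $3.01$ in the final bound.
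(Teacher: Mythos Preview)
Your opening sentence is exactly right: the paper also proves Theorem~\ref{theorem-individual} by taking the very same construction used for the $O(3.51^k)$ upper bound and re-optimizing its parameters for the per-variable product. But your description of that construction is not the paper's construction, and the version you describe cannot reach $3.01^k$.

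The paper does \emph{not} sample random $k$-clauses over an exponential number of variables with a per-literal sign bias $p$. Instead, it uses Lemma~\ref{lem:t} to build an unsatisfiable \emph{$(\ell,k)$-CNF} formula $F=F^+\wedge F^-$ on only $n=k^2$ variables: $F^+$ consists of purely positive $\ell$-clauses and $F^-$ of purely negative $k$-clauses, with $|F^+|\le ck^2(1-\rho)^{-\ell}$ and $|F^-|\le ck^2\rho^{-k}$. This monotone formula is then turned into a $k$-CNF formula $G$ by \emph{$k$-CNFification}: each short positive clause is padded with a complete $(k-\ell)$-CNF over fresh variables. The occurrence product is then bounded \emph{deterministically} by Proposition~\ref{prop-lk}(ii):
\[
\occ_G(x)\cdot\occ_G(\bar x)\ \le\ \max\bigl\{\,4^{k-\ell},\ 2^{k-\ell}|F^+|\cdot|F^-|\,\bigr\}\ \le\ \max\bigl\{\,4^{k-\ell},\ 2^{k-\ell}c^2k^4\rho^{-k}(1-\rho)^{-\ell}\,\bigr\}.
\]
No Chernoff bound on occurrences is needed; the only randomness is in Lemma~\ref{lem:t}, where a union bound over $2^{k^2}$ assignments shows unsatisfiability. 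The two parameters being optimized are $\ell$ and $\rho$ (not $n$, $m$, $p$): setting $\rho=\tfrac{k}{k+\ell}$ and $\ell=\lceil 0.2055k\rceil$ equalizes the two terms in the $\max$ and yields $O(3.01^k)$.

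The gap in your proposal is that a single sign-bias $p$ on uniformly random $k$-clauses gives $q_{\min}=\min(p,1-p)^k$, forcing $m\gtrsim n\,\min(p,1-p)^{-k}$, and then $\occ(x)\occ(\bar x)\approx p(1-p)(km/n)^2$ is still $\Theta(4^k)$ at the optimum $p=\tfrac12$; you note this yourself, but never specify the ``genuine structure'' that fixes it. The structure the paper uses is precisely the two-size monotone design plus $k$-CNFification with fresh padding variables, and the second clause size $\ell$ is the extra degree of freedom that lets the optimization escape $4^k$.
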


\section{Proofs}\label{section-proofs}

For a truth assignment $\alpha$ and a clause $C$, we will write
$\alpha \models C$ if $\alpha$ satisfies $C$, and $\alpha\not\models
C$ if it does not.  Similarly, if $\alpha$ satisfies a formula $F$, we
write $\alpha \models F$. We begin by stating a version of the
Lopsided Lov\'asz Local Lemma formulated in terms of satisfiability.
See~\cite{SZ2008} for a derivation of this version.

\begin{lemma}[SAT version of the Lopsided Lov\'asz Local Lemma]
  Let $F$ be a CNF formula not containing the empty clause.  Sample a
  truth assignment $\alpha$ by independently setting each variable $x$
  to {\em \texttt{true}} with some probability $p(x) \in [0,1]$.  If
  for any clause $C \in F$, it holds that
 \begin{eqnarray}
   \sum_{D \in F:\ C\textnormal{ and } D \textnormal{ conflict}} \Pr[\alpha 
   \not \models D] \leq \frac{1}{4}\,,
   \label{ineq-sum}
 \end{eqnarray}
 then $F$ is satisfiable.
\label{corollary-llll}
\end{lemma}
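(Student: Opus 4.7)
The plan is to introduce, for each clause $C \in F$, the bad event $A_C := \{\alpha \not\models C\}$, and to apply the lopsided Lov\'asz Local Lemma (in its Erd\H{o}s--Spencer form) with the conflict graph of $F$ serving as a lopsidependency graph. Showing $\Pr[\bigcap_{C \in F} \overline{A_C}] > 0$ is equivalent to $F$ being satisfiable, so that is what the LLL will deliver.

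The main conceptual ingredient is to verify that the conflict graph really is a lopsidependency graph for the family $\{A_C\}_{C \in F}$; that is, for any $C$ and any collection $\mathcal{S} \subseteq F$ of clauses none of which conflict with $C$,
\[
\Pr\Bigl[A_C \Bigm| \bigcap_{D \in \mathcal{S}} \overline{A_D}\Bigr] \;\leq\; \Pr[A_C].
\]
The key observation is that if a variable $x$ occurs in both $C$ and some $D \in \mathcal{S}$, then it occurs with the same sign in both, since otherwise $C$ and $D$ would conflict. Consequently, conditioning on $A_C$ fixes each such variable to a value that already falsifies the corresponding literal of $D$, so $\overline{A_D}$ reduces under this conditioning to the sub-event ``some literal of $D$ on a variable outside $\mathrm{vars}(C)$ is true''. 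This sub-event is contained in $\overline{A_D}$ and depends only on variables independent of $A_C$, which yields the inequality after splitting on the variables of $C$ and multiplying over $D \in \mathcal{S}$.

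With lopsidependency established, I would invoke the LLL with the choice $x_C := 2\,\Pr[A_C]$. After cancelling $\Pr[A_C]$, the hypothesis $\Pr[A_C] \leq x_C \prod_{D \sim C,\, D \neq C}(1 - x_D)$ reduces to the requirement $\prod_{D \sim C,\, D \neq C}(1 - 2\Pr[A_D]) \geq \tfrac{1}{2}$, and this follows from the elementary bound $\prod_i(1 - y_i) \geq 1 - \sum_i y_i$ together with the assumed $\sum_{D \sim C} \Pr[A_D] \leq \tfrac{1}{4}$. The $x_C$ all lie in $[0,1)$: any clause $C$ with a neighbour $C'$ satisfies $\Pr[A_C] \leq \sum_{D \sim C'} \Pr[A_D] \leq \tfrac{1}{4}$, while isolated clauses impose no restriction beyond $\Pr[A_C] < 1$, which holds once $p$ is chosen in the open interior.

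The only step that is not purely mechanical is the lopsidependency check; everything afterwards is a short calculation from the sum hypothesis.
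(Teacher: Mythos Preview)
The paper does not prove this lemma; it simply states it and refers the reader to~\cite{SZ2008} for a derivation. So there is no in-paper argument to compare against.

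Your sketch is the standard derivation and is essentially correct. Two minor points are worth tightening. First, in the lopsidependency check, the phrase ``multiplying over $D\in\mathcal S$'' is misleading, since the sub-events $B_D$ are not mutually independent. The clean version is to condition on the full assignment $\beta$ of $\mathrm{vars}(C)$: for any $\beta$ realizing $A_C$ one has $\Pr\bigl[\bigcap_{D\in\mathcal S}\overline{A_D}\,\big|\,\beta\bigr]=\Pr\bigl[\bigcap_D B_D\bigr]$, while for a general $\beta$ the left side is at least $\Pr\bigl[\bigcap_D B_D\bigr]$; averaging over $\beta$ gives $\Pr\bigl[\bigcap_D\overline{A_D}\,\big|\,A_C\bigr]\le\Pr\bigl[\bigcap_D\overline{A_D}\bigr]$, as desired. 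Second, the boundary case of an isolated clause $C$ with $\Pr[A_C]=1$ (possible when some $p(x)\in\{0,1\}$) is not disposed of by ``choose $p$ in the open interior'', since the hypothesis fixes $p$. A direct fix: pick any literal $u\in C$ and reset its probability so that $u$ is true with probability~$1$. Because $C$ is isolated, no clause contains $\bar u$, so every $\Pr[A_D]$ can only decrease; the sum hypothesis is preserved and now $\Pr[A_C]=0$. After this preprocessing, taking $x_C=2\Pr[A_C]$ for non-isolated $C$ and $x_C=\Pr[A_C]$ for isolated $C$ makes the Lov\'asz condition go through exactly as you describe.
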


It is not possible to apply
Lemma~\ref{corollary-llll} directly to a formula $F$ which we want to prove
being satisfiable. Instead, we apply it to a formula $F'$ we obtain
from $F$ in the following way:

\begin{definition}
  Let $F$ be a CNF formula. A {\em truncation} of $F$ is a CNF formula
  $F'$ that is obtained from $F$ by deleting some literals from some
  clauses.
\end{definition}

For example, $(x \vee y) \wedge (\bar{y} \vee z)$ is a truncation of
$(x \vee y \vee \bar{z}) \wedge (\bar{x} \vee \bar{y} \vee z)$.  A
truncation of a $k$-CNF formula is not necessarily a $k$-CNF formula
anymore. Any truth assignment satisfying a truncation $F'$ of $F$ also
satisfies $F$. In our proofs, we will often find it easier to apply
Lemma~\ref{corollary-llll} to a special truncation of $F$ than to $F$
itself. We need a technical lemma on the binomial coefficient.
 
\begin{lemma}\label{lem:technical}
  Let $a,b\in \mathbb{N}$ with $b/a\leq 0.75$. Then
  \[
  \frac{a^b}{b!} \geq \binom{a}{b} > \frac{a^b}{b!} e^{-b^2/a}\,.
  \]
\end{lemma}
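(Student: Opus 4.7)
The plan is to handle the two inequalities separately. The left inequality $\binom{a}{b}\leq a^b/b!$ is immediate: write $\binom{a}{b}=\frac{1}{b!}\prod_{i=0}^{b-1}(a-i)$ and observe that each factor $a-i$ is bounded by $a$.

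For the lower bound, my plan is to factor out $a^b/b!$ to get $\binom{a}{b}=\frac{a^b}{b!}\prod_{i=0}^{b-1}(1-i/a)$, reducing the problem to showing
\[
\sum_{i=0}^{b-1}\ln(1-i/a) \;>\; -\frac{b^2}{a}.
\]
The main idea is to compare this sum to an integral. Since $t\mapsto\ln(1-t)$ is strictly decreasing on $[0,b/a]$, and the hypothesis $b/a\leq 0.75$ keeps us safely bounded away from the singularity at $t=1$, the left-endpoint Riemann sum of step $1/a$ overestimates the integral:
\[
\sum_{i=0}^{b-1}\ln(1-i/a) \;\geq\; a\int_0^{b/a}\ln(1-t)\,dt.
\]
Integration by parts yields the antiderivative $\int\ln(1-t)\,dt=-(1-t)\ln(1-t)-t$; writing $y:=b/a$, the right-hand side evaluates to $-a(1-y)\ln(1-y)-b$.

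To finish, I would apply the elementary inequality $\ln(1-y)\leq -y$, which is valid for $0\leq y<1$ and strict for $y>0$. Multiplying both sides by the positive quantity $1-y$ gives $(1-y)\ln(1-y)\leq y^2-y$, so
\[
-a(1-y)\ln(1-y)-b \;\geq\; a(y-y^2)-b \;=\; -\frac{b^2}{a}.
\]
Chained with the Riemann-sum inequality, this yields the claim. The only point requiring care is the strictness of the final inequality: I would observe that the Riemann-sum comparison is already strict for $b\geq 1$, because the $i=0$ summand equals $0$ while the integrand $\ln(1-t)$ is strictly negative on $(0,1/a]$. Everything else is a direct calculation, so I do not expect any real obstacle beyond this bookkeeping.
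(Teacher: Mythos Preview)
Your argument is correct. Both you and the paper start from the common factorization $\binom{a}{b}=\frac{a^{b}}{b!}\prod_{j=0}^{b-1}(1-j/a)$, but then proceed differently. The paper bounds each factor pointwise by the inequality $1-x>e^{-2x}$, which holds on $[0,0.75]$, and then uses $\sum_{j=0}^{b-1}j\le b^{2}/2$; this yields a one-line proof but genuinely needs the hypothesis $b/a\le 0.75$, since $1-x>e^{-2x}$ fails once $x$ is a bit larger than $0.75$. Your route---comparing the left-endpoint Riemann sum of the decreasing function $t\mapsto\ln(1-t)$ to its integral, evaluating that integral in closed form, and finishing with $\ln(1-y)\le -y$---is a couple of steps longer but never invokes the bound $b/a\le 0.75$: it goes through for any $b/a<1$. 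So your version is slightly more general, while the paper's is slightly shorter; both are equally elementary.
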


\begin{proof}
  The upper bound is trivial and true for all $a,b$. The lower bound
  follows like this.
\[
\binom{a}{b} = \frac{a(a-1)\cdots 1}{b!} = \frac{a^b}{b!}
\prod_{j=0}^{b-1} \frac{a-j}{a} > \frac{a^b}{b!} e^{-2/a
  \sum_{j=0}^{b-1} j} \geq \frac{a^b}{b!} e^{-b^2/a}\,,
\]
where we used the fact that $1-x > e^{-2x}$ for $0\leq x \leq 0.75$.
\end{proof}

%
%
%
%

\subsection{Proof of Theorem~\ref{tradeoff-exponential} and~\ref{theorem-individual}}

As we have argued in Section~\ref{section-first-attempt}, in order to
improve significantly upon the upper bound $gc(k) \leq 4^k$, we must
construct a formula that is very unbalanced, i.e.  $\occ_F(x)$ is
exponentially larger than $\occ_F(\bar{x})$.
First, we will construct an unsatisfiable CNF formula with $k$-clauses
and some smaller clauses.
In a second step, we expand all
clauses to size $k$.


\begin{definition}
  Let $F$ be a CNF formula with clauses of size at most $k$.  For each
  $k'$-clause $C$ with $k'<k$, construct a complete $(k-k')$-CNF
  formula $\mathcal{K}_{k-k'}$ over $k-k'$ new variables
  $y^C_1,\dots,y^C_{k-k'}$. We replace $C$ by $C \vee
  \mathcal{K}_{k-k'}$. Using distributivity, we expand it into a
  $k$-CNF formula $G$ called a \emph{$k$-CNFification of $F$}.
\end{definition}

For example, a $3$-CNFification of $(x \vee y) \wedge (\bar{x} \vee
y \vee z)$ is $(x \vee y \vee y_1) \wedge (x \vee y \vee \bar{y}_1)
\wedge (\bar{x} \vee y \vee z)$. A truth
assignment satisfies $F$ if and only if it satisfies its
$k$-CNFification $G$.

\begin{definition}
  Let $\ell, k \in \mathbb{N}_0$. An $(\ell,k)$-CNF formula is a
  formula consisting of $\ell$-clauses containing only positive
  literals, and $k$-clauses containing only negative literals.
\end{definition}

If $F$ is an $(\ell,k)$-CNF formula, we write $F=F^+ \wedge F^-$,
where $F^+$ consists of the positive $\ell$-clauses and $F^-$ of
the negative $k$-clauses.

\begin{proposition}
  Let $\ell \leq k$ and let $F=F^+ \wedge F^-$ be an $(\ell,k)$-CNF
  formula.  Let $G$ be the $k$-CNFification of $F$. Then
  \begin{itemize}
    \item[(i)] $e(G) \leq 4^{k-\ell}|F^+| + 2^{k-\ell}|F^+|\cdot|F^-|$,
    \item[(ii)] $\occ_G(x) \cdot \occ_G(\bar{x}) \leq
      \max\{4^{k-\ell}, 2^{k-\ell}|F^+|\cdot|F^-|\}$.
  \end{itemize}
\label{prop-lk}
\end{proposition}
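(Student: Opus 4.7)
The plan is a direct case analysis exploiting the very restricted structure of an $(\ell,k)$-CNF formula. Each positive $\ell$-clause $C \in F^+$ is expanded using $k-\ell$ fresh variables $y_1^C, \ldots, y_{k-\ell}^C$ that appear only in the $2^{k-\ell}$ expansions of $C$, while each negative $k$-clause of $F^-$ is already of size $k$ and therefore passes unchanged into $G$. These observations drastically restrict which pairs of clauses of $G$ can conflict and which variables can accumulate occurrences of both polarities.

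For part (i), I would partition the unordered pairs of clauses of $G$ into four types by origin: (a) two expansions of the same $C \in F^+$; (b) two expansions of distinct $C_1, C_2 \in F^+$; (c) an expansion of some $C \in F^+$ paired with a clause $D \in F^-$; and (d) two clauses of $F^-$. Type (b) cannot produce conflicts, since only positive $x$-literals and disjoint fresh $y$-variables are involved, and type (d) cannot either, since all literals are negative. Type (a) contributes at most $\binom{2^{k-\ell}}{2} \leq \tfrac{1}{2} 4^{k-\ell}$ conflicts per positive clause (any two distinct expansions differ in some $y_i^C$ and hence conflict), and type (c) contributes at most $2^{k-\ell}$ conflicts for each pair $(C,D)$ that shares a variable, and zero otherwise. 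Summing these contributions yields $e(G) \leq 4^{k-\ell}|F^+| + 2^{k-\ell}|F^+|\cdot|F^-|$.

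For part (ii), I would analyze $\occ_G(x)\cdot \occ_G(\bar x)$ one variable at a time. If $x = y_i^C$ is fresh, then $x$ occurs only in the $2^{k-\ell}$ expansions of $C$, of which exactly half contain $y_i^C$ positively and half contain it negatively, giving $\occ_G(x)\cdot \occ_G(\bar x) = 4^{k-\ell-1} \leq 4^{k-\ell}$. If instead $x$ is an original variable of $F$, then only expansions of clauses in $F^+$ supply positive occurrences and only clauses of $F^-$ supply negative ones, yielding $\occ_G(x) \leq 2^{k-\ell}\cdot|\{C \in F^+ : x \in C\}|$ and $\occ_G(\bar x) \leq |\{D \in F^- : \bar x \in D\}|$, whose product is at most $2^{k-\ell}|F^+|\cdot|F^-|$. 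Taking the maximum over the two cases gives the claim.

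There is no real technical obstacle; the proposition is essentially a bookkeeping lemma. The one point to be careful about is to count each conflict in $G$ once as an edge of $CG(G)$ rather than once per complementary pair of literals, which is automatically handled by summing over unordered pairs of clauses rather than over conflicting variables.
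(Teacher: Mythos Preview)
Your proposal is correct and follows essentially the same approach as the paper: the paper also splits by whether conflicts come from the new variables inside a single expanded $\ell$-clause (your type (a), contributing the $4^{k-\ell}|F^+|$ term) or from original positive/negative pairs (your type (c), contributing $2^{k-\ell}|F^+|\cdot|F^-|$), and handles part (ii) by the same two-case analysis on fresh versus original variables. Your write-up is slightly more explicit in listing and ruling out the non-contributing pair types, but the argument is the same.
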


\begin{proof}
  To prove (i), note that every edge in $CG(F)$ runs between a
  positive $\ell$-clause $C$ and a negative $k$-clause $D$. Thus,
  $e(F) \leq |F^+|\cdot|F^-|$.  In $G$, this edge is replaced by
  $2^{k-\ell}$ edges, since $C$ is replaced by $2^{k-\ell}$ copies.
  This explains the term $2^{k-\ell}|F^+|\cdot|F^-|$. Replacing $C$ by
  $2^{k-\ell}$ many $k$-clauses introduces at most $4^{k-\ell}$ new
  conflicts. This explains the term $4^{k-\ell}|F^+|$, and proves (i).
  To prove $(ii)$, there are two cases. First, if $x$ appears in $F$,
  then $\occ_G(\bar{x}) = \occ_F(\bar{x})$ and $\occ_G(x) =
  \occ_F(x)2^{k-\ell}$, thus $\occ_G(x)\occ_G(\bar{x}) \leq
  2^{k-\ell}|F^+|\cdot|F^-|$. Second, if $x$ appears in $G$, but not
  in $F$, then $\occ_G(x) = \occ_G(\bar{x}) = 2^{k-\ell-1}$, and
  $\occ_G(x) \cdot \occ_G(\bar{x}) \leq 4^{k-\ell}$.
\end{proof}

We should explore for which values of $|F^+|$ and $|F^-|$ there are
unsatisfiable $(\ell,k)$-CNF formulas. We can then use
Proposition~\ref{prop-lk} to derive upper bounds.


\begin{lemma}\label{lem:t}
  For any $\rho \in (0,1)$, there is a constant $c$ such that for all
  $k\in\mathbb{N}_0$ and $\ell \leq k$, there exists an unsatisfiable
  $(\ell, k)$-CNF formula $F = F^+ \wedge F^-$ with $|F^-| \leq ck^2
  \rho^{-k}$ and $|F^+| \leq ck^2 (1-\rho)^{-\ell}$. \\
\end{lemma}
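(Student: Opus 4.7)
The plan is to prove the lemma by the probabilistic method. Viewing $F^+$ and $F^-$ as hypergraphs ($F^+$ a family of $\ell$-subsets and $F^-$ a family of $k$-subsets of a common variable set $V$), the formula $F = F^+ \wedge F^-$ is unsatisfiable exactly when, for every $S\subseteq V$ (interpreted as the set of variables set to \texttt{true}), either some edge of $F^-$ lies entirely in $S$, or some edge of $F^+$ lies entirely in $V\setminus S$. So the task is to build two random hypergraphs that together ``cover'' all $2^n$ colourings of $V$, in the spirit of Erd\H{o}s' non-$2$-colourable hypergraph construction.

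Fix $V = [n]$ with $n := \max\bigl\{\lceil k^2/\rho\rceil,\lceil\ell^2/(1-\rho)\rceil\bigr\}$, which is $\Theta(k^2)$ with the implied constant depending only on $\rho$. Draw $F^-$ as $m^- := \lceil ck^2\rho^{-k}\rceil$ independent uniformly random $k$-subsets of $V$, and independently draw $F^+$ as $m^+ := \lceil ck^2(1-\rho)^{-\ell}\rceil$ uniformly random $\ell$-subsets; the constant $c$ will be fixed at the end. For any $S$ with $|S|=s$, independence of the samples gives
\[
\Pr[S \models F] = p_s^-\, p_s^+, \qquad
p_s^- = \Bigl(1-\tfrac{\binom{s}{k}}{\binom{n}{k}}\Bigr)^{m^-},\quad
p_s^+ = \Bigl(1-\tfrac{\binom{n-s}{\ell}}{\binom{n}{\ell}}\Bigr)^{m^+}.
\]
Combining Lemma~\ref{lem:technical} with the trivial upper bound $\binom{n}{k}\leq n^k/k!$ yields $\binom{s}{k}/\binom{n}{k}\geq (s/n)^k e^{-k^2/s}$. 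For $s\geq \rho n$ we have $(s/n)^k\geq \rho^k$, and because $n\geq k^2/\rho$ also $e^{-k^2/s}\geq e^{-1/\rho}$, giving $p_s^-\leq \exp(-ck^2 e^{-1/\rho})$. Symmetrically, for $s\leq \rho n$ the choice $n\geq \ell^2/(1-\rho)$ yields $p_s^+\leq \exp(-ck^2 e^{-1/(1-\rho)})$. Since every $s$ falls into at least one of these two regimes, $p_s^- p_s^+ \leq \exp(-c'' k^2)$ with $c'':= c\,\min\!\bigl(e^{-1/\rho},e^{-1/(1-\rho)}\bigr)$.

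A union bound over all $2^n$ subsets $S\subseteq V$ then gives
\[
\Pr[F\text{ is satisfiable}] \;\leq\; 2^n\exp(-c''k^2) \;\leq\; \exp\!\bigl(C_\rho\, k^2\log 2 \;-\; c''\, k^2\bigr),
\]
where $n\leq C_\rho k^2$. Choosing $c$ large enough in terms of $\rho$ makes this strictly smaller than $1$, so some realisation of $(F^+, F^-)$ is unsatisfiable and satisfies the size bounds (with a constant absorbed to remove the ceilings). I do not expect a deep obstacle; the only real care is in sizing $n$ so that the hypotheses of Lemma~\ref{lem:technical} hold uniformly and so that the two regimes $s\geq \rho n$ and $s\leq \rho n$ really do partition all the relevant $s$. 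Small values of $k$ (or $\ell$), where the probabilistic bound degenerates, are handled trivially by choosing $c$ large enough that $ck^2\rho^{-k}$ and $ck^2(1-\rho)^{-\ell}$ dominate a brute-force construction such as $\mathcal{K}_k$.
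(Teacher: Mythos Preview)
Your proposal is correct and follows essentially the same probabilistic argument as the paper: random positive $\ell$-clauses and negative $k$-clauses over $\Theta(k^2)$ variables, a threshold at $s=\rho n$, Lemma~\ref{lem:technical} to lower-bound the binomial ratio, and a union bound over $2^n$ assignments. Your only deviations are cosmetic---taking $n=\max\{\lceil k^2/\rho\rceil,\lceil \ell^2/(1-\rho)\rceil\}$ rather than $n=k^2$ so that the hypothesis $b/a\le 0.75$ of Lemma~\ref{lem:technical} is verified explicitly, and spelling out that small $k$ are absorbed into the constant---which if anything makes the write-up a little more careful than the paper's.
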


\begin{proof}
  We choose a set variables $V=\{x_1,\dots,x_n\}$ of $n=k^2$
  variables.  There are $\binom{n}{k}$ $k$-clauses over $V$ containing
  only negative literals. We form $F^-$ by sampling $ck^2 \rho^{-k}$
  of them, uniformly with replacement, and similarly, we form $F^+$ by
  sampling $ck^2 (1-\rho)^{-\ell}$ purely positive $\ell$-clauses,
  where $c$ is some suitable constant determined later. Set $F=F^-
  \wedge F^+$.  We claim that with high probability, $F$ is
  unsatisfiable.  Let $\alpha$ be any truth assignment. There are two
  cases.
  
  \emph{Case 1.} $\alpha$ sets at least $\rho n$ variables to
  \texttt{true}.  For a random negative clause $C$,
  \[
  \Pr[\alpha \not \models C] \geq 
  \frac{\binom{\rho n}{k}}{\binom{n}{k}} \geq c' \rho^k\,,
  \]
  The last inequality follows from Lemma~\ref{lem:technical}. Since we
  select the clauses of $F^-$ independently of each other, we obtain
  \[
  \Pr[\alpha \models F^-] \leq (1 - c'\rho^k)^{ck^2 \rho^{-k}} <
  e^{-c c' k^2}=e^{-k^2}\, ,
  \]
  provided we chose $c$ large enough, i.e., $c \geq \frac{1}{c'}$.
  
  {\em Case 2}: $\alpha$ sets at most $\frac{n}{a}$ variables to
  \texttt{true}. Now a similar calculation shows that $\alpha$
  satisfies $F^+$ with probability at most $e^{-k^2}$.

  In any case, $\Pr[\alpha \models F] \leq e^{-k^2}$. The expected
  number of satisfying assignments of $F$ is at most $2^{k^2}e^{-k^2}
  \ll 1$ and with high probability $F$ is unsatisfiable.
\end{proof}

The bound in Lemma~\ref{lem:t} is tight up to a polynomial factor in
$k$:

\begin{lemma}
 Let $F=F^+ \wedge F^-$ be an $(\ell,k)$-CNF formula.
  If there is a $\rho \in (0,1)$ such that $|F^+| <
  \frac{1}{2}(1-\rho)^{-\ell}$ and $|F^-| < \frac{1}{2}\rho^{-k}$,
  then $F$ is satisfiable.
\end{lemma}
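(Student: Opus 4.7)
The plan is a direct probabilistic argument via union bound; no Local Lemma is needed here. The idea is to set each variable independently to \texttt{true} with probability exactly $\rho$, and show that under this distribution the expected number of unsatisfied clauses is strictly less than one, which (since that number is a nonnegative integer) forces existence of an assignment for which it is zero.

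Concretely, I would observe the following. A purely positive $\ell$-clause $C \in F^+$ fails to be satisfied if and only if every variable in $C$ is set to \texttt{false}; under the above distribution this happens with probability $(1-\rho)^\ell$. A purely negative $k$-clause $D \in F^-$ fails iff every variable in $D$ is set to \texttt{true}, which happens with probability $\rho^k$. By the union bound,
\[
\Pr[\alpha \not\models F] \;\leq\; \sum_{C \in F^+} (1-\rho)^\ell + \sum_{D \in F^-} \rho^k \;=\; |F^+|(1-\rho)^\ell + |F^-|\rho^k.
\]
Now plug in the hypotheses $|F^+| < \tfrac{1}{2}(1-\rho)^{-\ell}$ and $|F^-| < \tfrac{1}{2}\rho^{-k}$ to obtain $\Pr[\alpha \not\models F] < \tfrac{1}{2} + \tfrac{1}{2} = 1$, so with positive probability $\alpha$ satisfies $F$, proving satisfiability.

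There is no real obstacle here; the statement is essentially designed so that the two error terms in the union bound match up with the two numerical hypotheses. The only subtlety worth noting is that the choice of biasing probability is forced: because $F^+$ and $F^-$ use complementary types of literals, a single sampling probability $\rho$ simultaneously controls both terms, and the two hypotheses are precisely the conditions under which each term is at most $\tfrac{1}{2}$. This also explains why Lemma~\ref{lem:t} is tight up to the polynomial factor $k^2$: the lemma constructs unsatisfiable formulas whose sizes beat these thresholds only by roughly a factor $ck^2$, matching the $\tfrac{1}{2}$ slack in the union bound up to lower-order terms.
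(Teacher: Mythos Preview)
Your proof is correct and essentially identical to the paper's: set each variable to \texttt{true} independently with probability $\rho$, then bound the expected number of unsatisfied clauses (equivalently, apply the union bound) by $|F^+|(1-\rho)^\ell + |F^-|\rho^k < 1$. The only cosmetic difference is that the paper phrases it via linearity of expectation rather than the union bound.
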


\begin{proof}
  Sample a truth assignment
  $\alpha$ by setting each variable independently to \texttt{true}
  with probability $\rho$. For a negative $k$-clause $C$, it holds
  that $\Pr[\alpha \not \models C] = \rho^k$. Similarly, for a
  positive $\ell$-clause $D$, $\Pr[\alpha \not \models D] =
  (1-\rho)^{\ell}$.  Hence the expected number of clauses in $F$ that
  are unsatisfied by $\alpha$ is $ \rho^k |F^-| + (1-\rho)^{\ell}|F^+|
  < \frac{1}{2} + \frac{1}{2} = 1$.
  Therefore, with positive probability $\alpha$ satisfies $F$.
\end{proof}

%
%
%


\begin{proof}[Proof of Theorem~\ref{tradeoff-exponential}]
  $(i)$ Apply Lemma~\ref{lem:t} with $\ell=k$ and $\rho=\frac{1}{a}$.\\
  
  $(ii)$ We fix some probability $p := \frac{1}{a^2} \geq
  \frac{1}{2}$, and set every variable of $F$ to \texttt{true} with
  probability $p$, independent of each other. This gives a random
  truth assignment $\alpha$. We define a truncation $F'$ of $F$ as
  follows: For each clause $C \in F$, if at least half the literals of
  $C$ are negative, we remove all positive literals from $C$ and
  insert the truncated clause into $F'$, otherwise we insert $C$ into
  $F'$ without truncating it.  We write $F' = F_k \wedge F^-$, where
  $F^-$ consists of purely negative clauses of size at least
  $\frac{k}{2}$, and $F_k$ consists of $k$-clauses, each containing at
  least $\frac{k}{2}$ positive literals.  A clause in $F^-$ is
  unsatisfied with probability at most $p^{\frac{k}{2}}$, and a clause
  in $F_k$ with probability at most $p^{\frac{k}{2}}
  (1-p)^{\frac{k}{2}}$. This is because in the worst case, half of all
  literals are negative: Since $p \geq \frac{1}{2}$, negative literals
  are more likely to be unsatisfied than positive ones.  Let $C \in
  F'$ be any clause. A positive literal $x \in C$ causes conflicts
  between $C$ and the $\occ_{F'}(\bar{x}) \leq \frac{a^k}{8k}$ clauses
  of $F'$ containing $\bar{x}$. Similarly, a negative literal $\bar{y}
  \in C$ causes conflicts with the at most $\frac{b^k}{8k}$ clauses of
  $F_k$ containing $y$. Therefore
  $$
  \sum_{D\in F : \text{ $C$ and $D$ conflict}}\Pr[\alpha \not \models D] \leq
  \frac{a^k}{8} p^{\frac{k}{2}} + \frac{b^k}{8}
  p^{\frac{k}{2}}(1-p)^{\frac{k}{2}} = \frac{1}{4} \ ,
  $$
  since $p=\frac{1}{a^2}$ and 
  $b = \sqrt{\frac{a^4}{a^2-1}}$.
  By Lemma~\ref{corollary-llll}, $F'$ is satisfiable. 
\end{proof}

Part $(ii)$ of Theorem~\ref{tradeoff-exponential} can easily be improved
by defining a more careful truncation procedure: We remove all
positive literals from a clause $C$ if $C$ contains less than $\lambda
k$ of them, for some $\lambda \in [0,1]$. Choosing $\lambda$ and $p$
optimally, we obtain a better result, but the calculations become
messy, and it offers no additional insight. The crucial part of the
proof is that by removing positive literals from a clause, we can use
the fact that $\occ_F(\bar{x})$ is small to bound the number of
clauses $D$ that conflict with $C$ and have a large probability of
being unsatisfied.  This is also the main idea in our proof of the
lower bound of Theorem~\ref{main}.
It should be pointed out that for $k = \ell$, an $(\ell,k)$-CNF
formula is just a monotone $k$-CNF formula. The size of a smallest
unsatisfiable monotone $k$-CNF formula is the same---up to a factor of
at most $2$---as the minimum number of hyperedges in a $k$-uniform
hypergraph that is not $2$-colorable. In 1963,
Erd\H{o}s~\cite{Erdos1963} raised the question what this number is,
and proved lower bound of $2^{k-1}$ (this is easy, simple choose a
random $2$-coloring). One year later, he~\cite{Erdos1964} gave a
probabilistic construction of a non-$2$-colorable $k$-uniform
hypergraph using $ck^22^k$ hyperedges. For $\ell=k$ and
$\rho=\frac{1}{2}$, the statement and proof of Lemma~\ref{lem:t} are
basically the same in~\cite{Erdos1964}.

 \begin{proof}[Proof of Theorem~\ref{theorem-individual}]
   Combining Lemma~\ref{lem:t} and Proposition~\ref{prop-lk}, we
   conclude that for any $\rho \in (0,1)$ and $0 \leq \ell \leq k$,
   there is an unsatisfiable $k$-CNF formula $F$ with
   $$
   \occ_F(x) \cdot \occ_F(\bar{x}) \leq
   \max \{4^{k-\ell}, 2^{k-\ell} c^2k^4 \rho^{-k} (1-\rho)^{-\ell}\} \ ,
   $$
   for every variable $x$. The constant $c$ depends on $\rho$, but not
   on $k$ or $\ell$. For fixed $k,\ell>1$, the term $\rho^{-k}
   (1-\rho)^{-\ell}$ is minimized for $\rho = \frac{k}{k+\ell}$.
   Choosing $\ell = \left\lceil 0.2055k\right\rceil$, we get $\rho
   \approx 0.83$ and $\occ_F(x) \cdot \occ_F(\bar{x}) \in O(3.01^k)$.
 \end{proof}

\subsection{Poof of the Main Theorem}\label{section-lower-bound}

 \begin{proof}[Proof of the upper bound of Theorem~\ref{main}]
   As in the previous proof, Proposition~\ref{prop-lk} together
   with Lemma~\ref{lem:t} yield an unsatisfiable $k$-CNF formula $F$ 
   with 
   $$
   e(F) \leq 4^{k - \ell} ck^2(1-\rho)^{-\ell} + 
   2^{k-\ell} c^2k^4\rho^{-k}(1-\rho)^{-\ell}\ .
   $$
   For $\rho \approx 0.6298$ and $\ell = \left\lceil
     0.333k\right\rceil$, we obtain $e(F) \in O(3.51^k)$.
 \end{proof}

\begin{proof}[Proof of the lower bound in Theorem~\ref{main}]
  Let $F$ be an unsatisfiable $k$-CNF and let $e(F)$ be the number of
  conflicts in $F$. We will show that $e(F) \in
  \Omega\left(2.69^k\right)$. In the proof, $x$ denotes a variable and
  $u$ a positive or negative literal.  We assume $\occ_F(\bar{x}) \leq
  \occ_F(x)$ for all variables $x$. We can do so since otherwise we
  just replace $x$ by $\bar{x}$ and vice versa. This changes neither
  $e(F)$, nor satisfiability of $F$. Also we can assume that
  $\occ_F(x)$ and $\occ_F(\bar{x})$ are both at least $1$, if $x$
  occurs in $F$ at all. For $x$, we define
  \[
  p(x) := \max \left\{\frac{1}{2}, 
    \sqrt[k]{\frac{\occ_F(x)}{16e(F)}}\ \right\}\,,
  \]
  and set $x$ to \texttt{true} with probability $p(x)$ independently
  of all other variables yielding a random assignment $\alpha$. Since
  $\occ_F(u) \leq e(F)$, we have $p(x) \leq 1$. We set $p(\bar{x}) = 1
  - p(x)$. By definition, $p(x) \geq p(\bar{x})$.  Let us list some
  properties of this distribution.  First, if $p(u) < \frac{1}{2}$ for
  some literal $u$, then $u$ is a negative literal $\bar{x}$, and
  $p(x) = \sqrt[k]{\frac{\occ_F(x)}{16e(F)}} > \frac{1}{2}$. Second,
  if $p(u) = \frac{1}{2}$, then both
  $\sqrt[k]{\frac{\occ_F(u)}{16e(F)}} \leq \frac{1}{2}$ and
  $\sqrt[k]{\frac{\occ_F(\bar{u})}{16e(F)}} \leq \frac{1}{2}$ hold.
  We distinguish two types of clauses: {\em Bad} clauses, which
  contain at least one literal $u$ with $p(u) < \frac{1}{2}$, and {\em
    good} clauses, which contain only literals $u$ with $p(u) \geq
  \frac{1}{2}$.
  Let $\mathcal{B} \subseteq F$ denote the set of bad clauses and $\mathcal{G}\subseteq F$ the set of good clauses.

  \begin{lemma}
   $\sum_{C \in \mathcal{B}} \Pr\ [\alpha \not \models C] \leq \frac{1}{8}$.
   \label{prop-bad}
  \end{lemma}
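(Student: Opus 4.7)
The plan is to assign to each bad clause $C$ a single \emph{representative} bad variable $x_C$, then to bound $\Pr[\alpha \not\models C]$ using all $k$ literals at once via $p(x_C)^k$, rather than via a weaker single-literal bound. This matters because the subsequent summation will count every conflict edge of $CG(F)$ at most twice, avoiding the factor of $k$ that would come from the naive inequality $\sum_x \occ_F(x)\occ_F(\bar{x}) \leq k\,e(F)$.

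Concretely, for each $C \in \mathcal{B}$ I would choose $\bar{x}_C \in C$ to be a bad literal with maximum $p(x_C)$ (equivalently, maximum $\occ_F(x_C)$) among the bad literals of $C$. The maximality guarantees that $1-p(u) \leq p(x_C)$ for every $u \in C$: a good literal $u$ satisfies $1-p(u) \leq 1/2 \leq p(x_C)$, while any other bad literal $\bar{y} \in C$ satisfies $1-p(\bar{y}) = p(y) \leq p(x_C)$ by the choice. Multiplying these $k$ inequalities yields
\[
\Pr[\alpha \not\models C] \;=\; \prod_{u \in C}(1-p(u)) \;\leq\; p(x_C)^k \;=\; \frac{\occ_F(x_C)}{16\,e(F)}.
\]

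Summing this over $C \in \mathcal{B}$ reduces the lemma to showing $\sum_{C \in \mathcal{B}} \occ_F(x_C) \leq 2\,e(F)$. Reading $\occ_F(x_C)$ as the number of clauses $D$ with $x_C \in D$, this sum counts ordered pairs $(C, D)$ with $C \in \mathcal{B}$, $\bar{x}_C \in C$, and $x_C \in D$; each such pair conflicts on $x_C$ and hence corresponds to an edge $\{C, D\}$ of $CG(F)$. A given unordered edge $\{C_1, C_2\}$ is charged at most twice: once from the direction $(C_1, C_2)$ (which contributes only when $C_1 \in \mathcal{B}$ and $x_{C_1} \in C_2$) and once symmetrically from $(C_2, C_1)$. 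So $\sum \occ_F(x_C) \leq 2\,e(F)$, and plugging back in gives the required $\tfrac{1}{8}$.

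The main obstacle is spotting and correctly justifying this double-counting step. The naive route through $\sum_x \occ_F(x)\occ_F(\bar{x}) \leq k\,e(F)$ loses a factor of $k$ and delivers only $k/16$, which is too weak; the gain comes from committing to a single representative $x_C$ per bad clause, so that each conflict edge is hit at most twice regardless of how many complementary-literal pairs the two clauses share. The $p(x_C)^k$ bound in the probability step is tight precisely because $x_C$ is chosen to dominate all literals of $C$; any other choice (say the bad literal minimizing $p$) would only give $\Pr[\alpha \not\models C] \leq p(x_C)$ and ruin this saving.
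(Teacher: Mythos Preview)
Your proof is correct and essentially identical to the paper's. The only cosmetic difference is that the paper picks $u_C \in C$ minimizing $p(u)$ over all literals of $C$, whereas you pick $\bar{x}_C$ maximizing $p(x_C)$ over the bad literals of $C$; since any bad literal has $p(u) < \tfrac{1}{2} \leq p(v)$ for every good literal $v$, these two choices coincide, and the rest of the argument (the bound $\Pr[\alpha \not\models C] \leq p(x_C)^k = \occ_F(x_C)/(16\,e(F))$ and the double-counting of conflict edges) matches the paper line for line.
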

  \begin{proof}
    For each clause $C \in \mathcal{B}$, let $u_C$ be the literal in
    $C$ minimizing $p(u)$, breaking ties arbitrarily. This means
    $\Pr[\alpha \not \models C] \leq p(\bar{u}_C)^k$. Since $C$ is a
    bad clause, $p(u_C) < \frac{1}{2}$, $u_C$ is a negative literal
    $\bar{x}_C$, and $p(x_C) = \sqrt[k]{\frac{\occ_F(x_C)}{16e(F)}}$.
    Thus
    \begin{eqnarray}
      \sum_{C \in \mathcal{B}} \Pr[\alpha
      \not \models C] \leq \sum_{C \in \mathcal{B}}
      p(x_C)^k = \sum_{C \in \mathcal{B}} \frac{\occ_F(x_C)}{16e(F)} \ .
      \label{ineq-bad-2}
    \end{eqnarray}
    Since clause $C$ contains $\bar{x}_C$, it conflicts with all
    $\occ_F(x_C)$ clauses containing $x_C$, thus $\sum_{C \in
      \mathcal{B}} \occ_F(x_C) \leq 2e(F)$.  The factor $2$ arises
    since we count each conflict possibly twice, once from each side.
    Combining this with (\ref{ineq-bad-2}) proves the lemma.
  \end{proof}


  We cannot directly apply Lemma~\ref{corollary-llll} to $F$.
  Therefore we apply the below sparsification process to $F$.
\begin{algorithm}[h]
\Titleofalgo{\ Sparsification Process}
\SetTitleSty{}{}
Let $\mathcal{G'}=\{D \in F \ | \  p(u) \geq \frac 12, \forall u \in D\}$
be the set of good clauses in $F$.\\
\vspace{1mm}
 \While{$\exists \textnormal{ a literal } u:  \ 
   \sum_{D\in \mathcal{G}' : u \in D } 
   \Pr[\alpha \not \models D] > \frac{1}{8k}$}
 {
   \vspace{1mm}
   Let $C$ be some clause maximizing 
   $\Pr[\alpha \not \models D]$
   among all clauses $D \in \mathcal{G}': u \in D$.\\
   \vspace{1mm}
   $C' := C \setminus \{u\}$\\
   $\mathcal{G}' := (\mathcal{G}' \setminus \{C\}) \cup \{C'\}$
   \vspace{1mm}
 }
 \Return{$F' := \mathcal{G'} \cup \mathcal{B}$}
\end{algorithm}

\begin{lemma}
  Let $F'$ be the result of the sparsification process. If $F'$ does
  not contain the empty clause, then $F$ is satisfiable.
  \label{prop-F'}
\end{lemma}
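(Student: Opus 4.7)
The plan is to apply Lemma~\ref{corollary-llll} to $F'$ under the random assignment $\alpha$ already defined. Since $F'$ is obtained from $F$ by removing literals from clauses (only good clauses get touched; bad clauses are left alone), $F'$ is a truncation of $F$, so any assignment satisfying $F'$ also satisfies $F$. Thus it suffices to show that $F'$ is satisfiable provided it does not contain the empty clause, and for this I verify condition~(\ref{ineq-sum}) of Lemma~\ref{corollary-llll}.

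First I would note that the sparsification process terminates: every iteration strictly decreases the size of some clause in $\mathcal{G}'$, and clause sizes started at $k$ and are bounded below by $0$. In particular, every clause $C \in F'$ still has $|C| \leq k$ (with equality for $C \in \mathcal{B}$, since bad clauses are never modified). Upon termination the while-condition fails, so for every literal $u$,
\[
\sum_{D \in \mathcal{G}' : u \in D} \Pr[\alpha \not\models D] \leq \frac{1}{8k}.
\]

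Now fix any clause $C \in F'$ and split the sum in (\ref{ineq-sum}) according to whether the conflicting clause $D$ lies in $\mathcal{B}$ or in $\mathcal{G}'$. For the bad part, Lemma~\ref{prop-bad} immediately gives
\[
\sum_{\substack{D \in \mathcal{B} \\ D \text{ conflicts } C}} \Pr[\alpha \not\models D] \;\leq\; \sum_{D \in \mathcal{B}} \Pr[\alpha \not\models D] \;\leq\; \frac{1}{8}.
\]
For the good part, I would group the conflicting clauses according to which literal of $C$ witnesses the conflict: $D \in \mathcal{G}'$ conflicts with $C$ iff some $u \in C$ has $\bar u \in D$. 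Using $|C| \leq k$ together with the termination bound applied to each $\bar u$,
\[
\sum_{\substack{D \in \mathcal{G}' \\ D \text{ conflicts } C}} \Pr[\alpha \not\models D] \;\leq\; \sum_{u \in C} \sum_{\substack{D \in \mathcal{G}' \\ \bar u \in D}} \Pr[\alpha \not\models D] \;\leq\; k \cdot \frac{1}{8k} \;=\; \frac{1}{8}.
\]
Adding the two contributions yields $\frac{1}{4}$, which is exactly the hypothesis of Lemma~\ref{corollary-llll}. Hence $F'$ is satisfiable, and so is $F$.

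I do not expect any real obstacle here: the sparsification was designed precisely so that the good contribution is at most $\frac{1}{8}$, while Lemma~\ref{prop-bad} caps the bad contribution at $\frac{1}{8}$. The only point to be careful about is that the $|C| \leq k$ bound used in the second sum also applies to bad clauses $C \in \mathcal{B}$ (it does, trivially, since they are unchanged) and that the termination bound on literals is valid for every literal $u$ — including those appearing in $C \in \mathcal{B}$ — which is automatic from the while-condition.
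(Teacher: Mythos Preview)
Your proposal is correct and follows essentially the same approach as the paper: split the sum in (\ref{ineq-sum}) into bad clauses (bounded by Lemma~\ref{prop-bad}) and good clauses (bounded by $k\cdot\frac{1}{8k}$ via the termination condition), then apply Lemma~\ref{corollary-llll} to the truncation $F'$. You are simply more explicit than the paper about termination, the bound $|C|\leq k$, and the grouping by conflict-witnessing literal.
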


\begin{proof}
  We will show that (\ref{ineq-sum}) applies to $F'$. Fix a clause $C
  \in F'$. After the sparsification process, every literal $u$
  fulfills $\sum_{D\in \mathcal{G'} : u \in D }\Pr[\alpha \not \models
  D] \leq \frac{1}{8k}$.  Therefore, the terms $\Pr[\alpha \not
  \models D]$, for all good clauses $D$ conflicting with $C$, sum up to at
  most $\frac{1}{8}$. By Lemma~\ref{prop-bad}, the terms $\Pr[\alpha
  \models D]$ for all bad clauses $D$ also sum up to at most
  $\frac{1}{8}$.  Hence (\ref{ineq-sum}) holds, and by
  Lemma~\ref{corollary-llll}, $F'$ is satisfiable, and clearly
  $F$ as well.  
\end{proof}

Contrary, if $F$ is unsatisfiable, the sparsification process produces
the empty clause.  We will show that $e(F)$ is large. There is some $C
\in \mathcal{G}$ all whose literals are being deleted during the
sparsification process. Write $C = \{u_1, u_2, \dots, u_k\}$, and
order the $u_i$ such that $\occ_F(u_1) \leq \occ_F(u_2)\leq \dots \leq
\occ_F(u_k)$. One checks that this implies that $p(u_1) \leq p(u_2)
\leq \dots \leq p(u_k)$.  Fix any $\ell \in \{1,\dots,k\}$ and let
$u_j$ be the first literal among $u_1,\dots,u_\ell$ that is deleted
from $C$. Let $C'$ denote what is left of $C$ just before that
deletion, and consider the set $\mathcal{G}'$ at this point of time.
Then $\{u_1,\dots,u_\ell\} \subseteq C' \in \mathcal{G}'$. By the
definition of the process,
  \begin{eqnarray*}
    \frac{1}{8k} & < & \sum_{D \in \mathcal{G}' :\ u_j \in D} \Pr[\alpha 
    \not \models D] \leq
    \sum_{D\in \mathcal{G}' :\ u_j \in D } \Pr[\alpha 
    \not \models C'] \leq \\
    & \leq & \occ_F(u_j) \Pr[\alpha \not \models C'] \leq \\
    & \leq & \occ_F(u_\ell) \prod_{i=1}^{\ell} (1-p(u_i)) \ .
  \end{eqnarray*}
  
  Since $p(u) \geq \sqrt[k]{\frac{\occ_F(u)}{16e(F)}}$ for all
  literals $u$ in a good clause, it follows that $\frac{1}{128ke(F)}
  \leq p(u_\ell)^k\prod_{i=1}^{\ell} (1-p(u_i))$,
  for every $1\leq \ell \leq k$.\\
  
  Let $(q_1,\dots,q_k) \in [\frac{1}{2},1]^k$ be any sequence
  satisfying the $k$ inequalities $\frac{1}{128ke(F)} \leq q_\ell^k
  \prod_{i=1}^{\ell} (1-q_i)$ for all $1\leq\ell\leq k$, for example,
  the $p(u_i)$ are such a sequence. We want to make the $q_\ell$ as
  small as possible: If (i) $q_\ell > \frac{1}{2}$ and (ii)
  $\frac{1}{128ke(F)} < q_\ell^k \prod_{i=1}^{\ell} (1-q_i)$, we can
  decrease $q_\ell$ until one of (i) and (ii) becomes an equality.
  The other $k-1$ inequalities stay satisfied. In the end we get a
  sequence $q_1, \dots, q_k$ satisfying $\frac{1}{128ke(F)} = q_\ell^k
  \prod_{i=1}^{\ell} (1-q_i)$ whenever $q_\ell > \frac{1}{2}$. This
  sequence is non-decreasing: If $q_\ell > q_{\ell+1}$, then
  $q_\ell>\frac{1}{2}$, and $\frac{1}{128ke(F)} \leq
  q_{\ell+1}^k\prod_{i=1}^{\ell+1}(1-q_i) <
  q_{\ell}^k\prod_{i=1}^{\ell}(1-q_i) = \frac{1}{128ke(F)}$, a
  contradiction. \\

  If all $q_i$ are $\frac{1}{2}$, then the $k$\textsuperscript{th}
  inequality yields $128ke(F) \geq 4^k$, and we are done. Otherwise,
  there is some $\ell^* = \min\{i \ | \ q_i > \frac{1}{2}\}$. For
  $\ell^* \leq j < k$ both $q_j$ and $q_{j+1}$ are greater than
  $\frac{1}{2}$, thus
  $q_{j+1}^k\prod_{i=1}^{j+1}(1-q_i)=q_{j}^k\prod_{i=1}^{j}(1-q_i)$,
  and $q_j = q_{j+1} \sqrt[k]{1-q_{j+1}}$. We define
  $$
  f_k(t) := t \sqrt[k]{1-t} \  ,
  $$
  thus $q_j = f_k(q_{j+1})$. By $f_k^{(j)}(t )$ we denote $f_k (f_k (
  \dots (f_k(t)) \dots))$, the $j$-fold iterated application of
  $f_k(t)$, with $f_k^{(0)}(t)=t$. We obtain $q_j = f_k^{(k-j)}(q_k) >
  \frac{1}{2}$ for $\ell^* \leq j \leq k$.  By Part (v) of
  Proposition~\ref{prop-ell}, $f_k^{(k-1)}(q_k) \leq \frac{1}{2}$,
  thus $\ell^* \geq 2$. Therefore $q_1 = \dots= q_{\ell^* - 1} =
  \frac{1}{2}$, and the $(l^*-1)$\textsuperscript{st} inequality reads
  as
  $$\frac{1}{128ke(F)} \leq q_{\ell^*-1}^k \prod_{i=1}^{\ell^*-1} (1-q_i)
  = 2^{-k-\ell^*+1} \ .
  $$
  We obtain $e(F) \geq \frac{2^{k+\ell^*-1}}{128k}$. How large is
  $\ell^*$? Define $S_k := \min\{\ell\in\mathbb{N}_0 \ | \
  f_k^{(\ell)}(t) \leq \frac{1}{2}\ \forall t \in [0,1]\}.$ By Part
  (v) of Proposition~\ref{prop-ell} (see appendix), $S_k$ is finite.
  Since $f_k^{(k-\ell^*)}(q_1) = q_{\ell^*} > \frac{1}{2}$, we
  conclude that $k - \ell^* \leq S_k-1$, thus $e(F) \geq \frac{2^{2k -
      S_k}}{128k}$.
  
  \begin{lemma}
    The sequence $\frac{S_k}{k}$ converges to 
    $
    \lim_{k \rightarrow \infty} \frac{S_k}{k} 
    = - \int_{\frac{1}{2}}^1 \frac{1}{x \ln (1-x)} dx < 0.572 
    $.
    \label{prop-alpha}
  \end{lemma}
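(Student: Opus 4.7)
The plan is to recognize the iteration $x_{n+1} = f_k(x_n) = x_n(1-x_n)^{1/k}$ as a forward-Euler discretization (with step size $1/k$) of the autonomous ODE
\[
X'(s) \;=\; X(s)\,\ln\!\bigl(1 - X(s)\bigr),
\]
and then read the limit off its closed-form integration. Taking logarithms turns the iteration into $\ln x_{n+1} - \ln x_n = \frac{1}{k}\ln(1 - x_n)$, which is exactly $1/k$ times the right side of the equivalent form $(\ln X)'(s) = \ln(1-X)$. Separating variables, the ODE time to carry $X(0) = t_0 \in (1/2, 1)$ down to $1/2$ is
\[
\tau(t_0) \;:=\; -\int_{1/2}^{t_0} \frac{dx}{x\,\ln(1-x)},
\]
which extends continuously to $t_0 = 1$ because $\ln(1-x) \to -\infty$ makes $1/\bigl(x\ln(1-x)\bigr)$ integrable there.

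Next I would reduce to a single worst-case trajectory. A direct calculation shows that $f_k$ is unimodal on $[0,1]$ with maximum $m_k = \frac{k}{k+1}(k+1)^{-1/k}$ attained at $k/(k+1)$, and monotone increasing on $[0, m_k]$. Hence after one iteration every trajectory lies in $[0, m_k]$, and by monotonicity the trajectory starting from $m_k$ pointwise dominates every other trajectory from iteration $1$ onward. So $S_k = 1 + N_k$, where $N_k$ is the number of iterations needed to carry $m_k$ below $1/2$, and it suffices to prove $\lim_{k\to\infty} N_k/k = \tau(1)$. A short expansion gives $m_k = 1 - \Theta(\log k / k)$.

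To prove the limit, I would split the trajectory at a fixed level $1 - \delta$. On the bulk $[1/2, 1-\delta]$ the vector field $X \mapsto X\ln(1-X)$ is Lipschitz, so a standard Euler/Gronwall comparison, using the Taylor expansion $(1-x)^{1/k} = 1 + \ln(1-x)/k + O(1/k^2)$ which is uniform on $[1/2, 1-\delta]$, shows that the discrete trajectory needs $k\,\tau(1-\delta) + o(k)$ iterations to descend from $1-\delta$ to $1/2$. For the initial segment from $m_k$ down to $1-\delta$, each step decreases $\ln x_n$ by at least $|\ln(1-m_k)|/k = \Theta(\log k/k)$, so the segment requires only $O(k/\log k) = o(k)$ iterations; and in the ODE, this segment has duration $\tau(m_k) - \tau(1-\delta) \to \tau(1) - \tau(1-\delta)$ as $k \to \infty$, which tends to $0$ as $\delta \to 0$. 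Assembling upper and lower bounds and sending first $k \to \infty$ and then $\delta \to 0$ gives the stated limit.

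The main obstacle is the blow-up of the vector field near $x = 1$, which prevents a single Lipschitz estimate over the entire trajectory. The rescuing fact is the integrability of $1/(x\ln(1-x))$ at $x=1$: it forces the initial segment to contribute a vanishing fraction of the time in both the discrete and continuous dynamics, so the limit is determined entirely by the well-behaved bulk regime. Finally, the numerical bound $-\int_{1/2}^{1} dx/(x\ln(1-x)) < 0.572$ is confirmed by direct numerical integration, which yields a value close to $0.5711$.
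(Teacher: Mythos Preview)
Your overall strategy is sound and close in spirit to the paper's: both arguments recognize the iteration as a discretization of $X' = X\ln(1-X)$ and extract the limit from the separated-variables integral. There is, however, a concrete error in your treatment of the initial segment.

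You claim that on the stretch from $m_k$ down to $1-\delta$, ``each step decreases $\ln x_n$ by at least $|\ln(1-m_k)|/k = \Theta(\log k/k)$''. The inequality goes the other way. The exact decrement is $\frac{1}{k}|\ln(1-x_n)|$, and on this segment $x_n \le m_k$, hence $1-x_n \ge 1-m_k$ and $|\ln(1-x_n)| \le |\ln(1-m_k)|$. So $|\ln(1-m_k)|/k$ is the \emph{largest} step (occurring at the very first iterate), while near the bottom of the segment the step is only $|\ln\delta|/k$. Consequently the conclusion $N_{\mathrm{init}}=O(k/\log k)=o(k)$ is false: for fixed $\delta>0$ one has $N_{\mathrm{init}}/k \to \int_{1-\delta}^{1}\frac{dx}{x|\ln(1-x)|} > 0$, so $N_{\mathrm{init}}$ is genuinely $\Theta(k)$.

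The fix is exactly the ``rescuing fact'' you already identified. Summing the exact log-increments telescopes to the comparison
\[
\frac{N_{\mathrm{init}}}{k} \;\le\; \int_{1-\delta}^{m_k}\frac{dx}{x\,|\ln(1-x)|} \;\le\; \int_{1-\delta}^{1}\frac{dx}{x\,|\ln(1-x)|} \;=\; \tau(1)-\tau(1-\delta),
\]
which is finite and tends to $0$ as $\delta\to 0$. Combined with your bulk estimate $N_{\mathrm{bulk}}/k \to \tau(1-\delta)$, the double limit (first $k\to\infty$, then $\delta\to 0$) now closes correctly for both the upper and the lower bound. So the architecture of your proof survives; only the quantitative claim about the initial segment needs to be replaced.

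For comparison, the paper avoids Euler/Gronwall machinery entirely. It encodes the same telescoping bound as the elementary two-sided estimate $(1-t)^{\ell/k}t \le f_k^{(\ell)}(t) \le (1-f_k^{(\ell)}(t))^{\ell/k}t$, translates this into inequalities for the hitting-time function $s_k(t)=S_k(t)/k$, and then shows that $\limsup_k s_k$ and $\liminf_k s_k$ both satisfy $s'(t)=\frac{1}{t\ln(1-t)}$ on $(0,1)$ with boundary value $0$ at $t=1$; equality of the two limits follows. This sidesteps the need to split off the singular segment or to invoke any numerical-ODE convergence theorem, at the cost of a short $\limsup/\liminf$ argument. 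Your ODE framing is perhaps more conceptual; the paper's route is more self-contained.
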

  
  The proof of this lemma is technical and not related to
  satisfiability. We prove it in the appendix. We conclude that $e(F)
  \geq \frac{2^{(2-0.572)k}}{128k} \in \Omega\left(2.69^k\right)$.
\end{proof}

\section{Conclusion}

We want to give some hindsight why a sparsification procedure is
necessary in both lower bound proofs in this paper.  The probability
distribution we define is not a uniform one, but biased towards
setting $x$ to \texttt{true} if $\occ_F(x) \gg \occ_F(\bar{x})$.  Let
$C$ be a clause containing $\bar{x}$. It conflicts with all clauses
containing $\bar{x}$. It could happen that in all those clauses, $x$
is the only literal with $p(x) > \frac{1}{2}$. In this case, each such
clause is unsatisfied with probability not much smaller than $2^{-k}$,
and the sum (\ref{ineq-sum}) is greater than $\frac{1}{4}$. By
removing $x$ from these clauses, we reduce the number of clauses
conflicting with $C$, making the sum (\ref{ineq-sum}) much smaller.
However, for other clauses $C'$, this sum might increase by removing
$x$. We think that one will not be able to prove a tight lower bound
using just a smarter sparsification process. We state some open
problems and questions.

\begin{quotation}
{\em Question:} Does $\lim_{k \rightarrow \infty} \sqrt[k]{gc(k)}$ exist?
\end{quotation}

If it does, it lies between $2.69$ and $3.51$. One way to prove
existence would be to define ``product'' taking a $k$-CNF formula $F$
and an $\ell$-CNF formula $G$ to a $(k+\ell)$-CNF formula $F \circ G$
that is unsatisfiable if $F$ and $G$ are, and $e(F \circ G) =
e(F)e(G)$. With $2$ and $4$ ruled out, there seems to be no obvious
guess for the value of the limit. What about $\sqrt{8}\approx 2.828$,
the geometric mean of $2$ and $4$? 

\begin{quotation}
  {\em Question:} Is there an $a > 2$ such that every unsatisfiable
  $k$-CNF formula contains a variable $x$ with $\occ_F(x) \cdot
  \occ_F(\bar{x}) \geq a^k$?
\end{quotation}

Where do our methods fail to prove this? The part in the proof of the
lower bound of Theorem~\ref{main} that fails is Lemma~\ref{prop-bad}.
On the other hand, Lemma~\ref{prop-bad} proves more than we need for
Theorem~\ref{main}: It proves that $\Pr[\alpha \models D]$, summed up
over {\em all} bad clauses gives at most $\frac{1}{8}$. We only need
that the bad clauses conflicting with a specific clause sum up to at
most $\frac{1}{8}$. Still, we do not see how to apply or extend our
methods to prove that such an $a > 2$ exists.\\

We discussed lower and upper bounds on the minimum of several parameters
of unsatisfiable $k$-CNF formulas. The following table lists them
where bounds labeled with an asterisk are from this paper and
unlabeled bounds are not attributed to any specific paper. 

\begin{table}[htbp]
  \centering
  \begin{tabular}{lllll}
    parameter & notation & lower bound & upper bound\\\hline
    occurrences of a literal & $\occ(x)$ & 1 & 1\\
    occurrences of a variable & $f(k)$ & 
    $\frac{2^k}{ek}$~\cite{KST1993} & $\frac{2^{k+3}}{k}$~\cite{Gebauer2009}\\
    local conflict number & $lc(k)$ & $\frac{2^k}{e}$~\cite{KST1993} & $2^k-1$\\
    conflicts caused by a variable \hspace{2mm} & $\occ(x)\occ(\bar{x})$ & 
    $\frac{2^k}{ek}$~\cite{KST1993} & $O(3.01^k)^*$\\
    global conflict number & $gc(k)$ & $\Omega(2.69^k)^*$ & $O(3.51^k)^*$
  \end{tabular}
\end{table}

\bibliographystyle{abbrv}
\bibliography{refs}

\begin{thebibliography}{10}

\bibitem{AS2000}
N.~Alon and J.~Spencer.
\newblock {\em {The probabilistic method}}.
\newblock Interscience Series in Discrete Mathematics and Optimization. John
  Wiley, second edition, 2000.

\bibitem{Erdos1963}
P.~Erd{\H{o}}s.
\newblock On a combinatorial problem.
\newblock {\em Nordisk Mat. Tidskr.}, 11:5--10, 40, 1963.

\bibitem{Erdos1964}
P.~Erd{\H{o}}s.
\newblock On a combinatorial problem. {II}.
\newblock {\em Acta Math. Acad. Sci. Hungar}, 15:445--447, 1964.

\bibitem{ES1991}
P.~Erd{\H{o}}s and J.~Spencer.
\newblock Lopsided {L}ov\'asz {Local Lemma} and {L}atin transversals.
\newblock {\em Discrete Appl. Math.}, 30(2-3):151--154, 1991.
\newblock ARIDAM III (New Brunswick, NJ, 1988).

\bibitem{Gebauer2009}
H.~Gebauer.
\newblock Disproof of the neighborhood conjecture and its implications to
  {SAT}, 2009.
\newblock submitted.

\bibitem{HS2006}
S.~Hoory and S.~Szeider.
\newblock A note on unsatisfiable $k$-{CNF} formulas with few occurences per
  variable.
\newblock {\em SIAM Journal on Discrete Mathematics}, 20(2):523--528, 2006.

\bibitem{KST1993}
J.~Kratochv\'{i}l, P.~Savick\'{y}, and Z.~Tuza.
\newblock {One more occurrence of variables makes satisfiability jump from
  trivial to {NP}-complete}.
\newblock {\em SIAM Journal of Computing}, 22(1):203--210, 1993.

\bibitem{LS2007}
L.~Lu and L.~Sz{\'e}kely.
\newblock Using {L}ov\'asz {Local Lemma} in the space of random injections.
\newblock {\em Electron. J. Combin.}, 14(1):Research Paper 63, 13 pp.
  (electronic), 2007.

\bibitem{SS2000}
P.~Savick\'y and J.~Sgall.
\newblock {DNF} tautologies with a limited number of occurrences of every
  variable.
\newblock {\em Theoret. Comput. Sci.}, 238(1--2):495--498, 2000.

\bibitem{SZ2008}
D.~Scheder and P.~Zumstein.
\newblock How many conflicts does it need to be unsatisfiable?
\newblock In {\em Eleventh International Conference on Theory and Applications
  of Satisfiability Testing (SAT), Lecture Notes in Computer Science,
  Vol.~4996}, pages 246--256, 2008.

\bibitem{Tovey1984}
C.~A. Tovey.
\newblock A simplified {NP}-complete satisfiability problem.
\newblock {\em Discrete Appl. Math.}, 8(1):85--89, 1984.

\end{thebibliography}


\begin{thebibliography}{1}

\bibitem{SZ08}
Scheder, D., Zumstein, P.:
\newblock How many conflicts does it need to be unsatisfiable?
\newblock In: SAT. (2008)  246--256

\bibitem{AS00}
Alon, N., Spencer, J.H.:
\newblock The probabilistic method. Second edn.
\newblock Wiley-Interscience Series in Discrete Mathematics and Optimization.
  Wiley-Interscience [John Wiley \& Sons], New York (2000) With an appendix on
  the life and work of Paul Erd\H os.

\bibitem{KST93}
Kratochv\'{i}l, J., Savick\'{y}, P., Tuza, Z.:
\newblock One more occurrence of variables makes satisfiability jump from
  trivial to {NP}-complete.
\newblock SIAM Journal of Computing \textbf{22}(1) (1993)  203--210

\end{thebibliography}

\newpage
\appendix

\section{Proof of Lemma~\ref{prop-alpha}}

\begin{proposition}
  Let $k \in \mathbb{N}$ and $f_k : [0,1] \rightarrow [0,1]$ with $f_k(t)=t\sqrt[k]{1-t}$. For $t \in [0,1]$, the following statements hold.
  \begin{itemize}
    
  \item[(i)] $f_k(t)$ attains its unique maximum at
    $t = t^*_k := \frac{k}{k+1}$.
  \item[(ii)] $f_k(t) \leq t$, and $f_k(t)=t$ if and only if $t=0$.
  \item[(iii)] For $\ell \geq 1$, $f^{(\ell)}_k(t) \leq
    f^{(\ell)}_k\left(\frac{k}{k+1}\right)$.
  \item[(iv)] For $\ell \geq 0$ and $t \in [0,1]$,
    $(1-t)^{\ell / k}\ t \leq f_k^{(\ell)}(t) \leq
    (1-f_k^{(\ell)}(t))^{\ell/k}\ t$.
  \item[(v)] For $k\geq 2$ and any $t \in [0,1]$, $f_k^{(k-1)}(t)
    \leq \frac{1}{2}$.
  \end{itemize}
  \label{prop-ell}
\end{proposition}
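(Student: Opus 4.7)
The plan is to establish the five parts in order: parts (i) and (ii) supply standard facts that (iii)--(v) will repeatedly invoke, and the real content is in (v), which reduces---through (iii) and (iv)---to the classical inequality $(1+1/k)^k \geq 2$.

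For (i), I would compute $f_k'(t) = (1-t)^{1/k - 1}\bigl(1 - \tfrac{k+1}{k}t\bigr)$, observe it has the unique zero $t_k^* = k/(k+1)$ on $(0,1)$, and check that $f_k'$ switches from positive to negative there, so $t_k^*$ is the unique maximum. Part (ii) is immediate: $(1-t)^{1/k} \in [0,1]$ on $[0,1]$ with equality to $1$ only at $t = 0$, so $f_k(t) \leq t$ with equality iff $t = 0$.

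For (iii) I induct on $\ell$; the base case $\ell = 1$ is (i). For the inductive step, part (ii) applied iteratively gives $f_k^{(\ell-1)}(t_k^*) \leq t_k^*$, and the induction hypothesis gives $f_k^{(\ell-1)}(t) \leq f_k^{(\ell-1)}(t_k^*)$, so both iterates lie in $[0, t_k^*]$, the interval on which $f_k$ is monotone increasing by (i); applying $f_k$ to both sides preserves the inequality. For (iv), I unroll the recursion to
\[
f_k^{(\ell)}(t) = t \prod_{i=0}^{\ell-1}(1 - f_k^{(i)}(t))^{1/k},
\]
and use that $f_k^{(i)}(t)$ is non-increasing in $i$ (by (ii)), so each factor lies between $(1-t)^{1/k}$ and $(1 - f_k^{(\ell-1)}(t))^{1/k}$; since $f_k^{(\ell)}(t) \leq f_k^{(\ell-1)}(t)$, the latter is at most $(1 - f_k^{(\ell)}(t))^{1/k}$, and taking the product over $i$ yields both bounds of (iv).

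The main obstacle is (v). By (iii) it suffices to prove $y := f_k^{(k-1)}(t_k^*) \leq 1/2$. Substituting $\ell = k-1$ and $t = t_k^* = k/(k+1)$ into the upper bound of (iv) yields
\[
y \;\leq\; (1-y)^{(k-1)/k} \cdot \tfrac{k}{k+1}.
\]
I would then assume for contradiction that $y > 1/2$; this forces $(1-y)^{(k-1)/k} < (1/2)^{(k-1)/k} = \tfrac{1}{2} \cdot 2^{1/k}$, and the displayed bound gives $y < \tfrac{1}{2} \cdot 2^{1/k} \cdot \tfrac{k}{k+1}$. The remaining step is to check that this right-hand side is at most $1/2$, equivalently $2^{1/k} \leq 1 + 1/k$, i.e.\ $(1+1/k)^k \geq 2$; this holds for all $k \geq 1$ (the sequence is increasing toward $e$, with $(1+1/1)^1 = 2$), and is strict for $k \geq 2$, producing the contradiction $y < 1/2$ and completing (v).
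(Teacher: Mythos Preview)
Your proposal is correct and follows essentially the same route as the paper's proof: both reduce (v) via (iii) to $t=t_k^*$, invoke the upper bound from (iv), assume $y>1/2$ for contradiction, and land on the inequality $(1/2)^{(k-1)/k}\cdot\tfrac{k}{k+1}>1/2$; you simply make the final step explicit by rewriting it as $(1+1/k)^k\geq 2$, whereas the paper leaves this as ``an elementary calculation.'' Your treatments of (iii) and (iv) are also the paper's arguments with the monotonicity and product-unrolling details spelled out.
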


\begin{proof}
  $(i)$ follows from elementary calculus. $(ii)$ holds since
  $\sqrt[k]{1-t}$ is less than $1$ for all $t > 0$. For $\ell=1$,
  $(iii)$ follows from $(i)$, and for greater $\ell$, it follows from
  $(ii)$ and induction on $\ell$. $(iv)$ holds because each of the
  $\ell$ applications of $f_k$ multiplies its argument with a factor
  that is at least $\sqrt[k]{1-t}$ and at most $\sqrt[k]{1 -
    f^{(\ell)}_t}$.  Suppose $(v)$ does not hold. Then by (iii) we get
  $f^{(k-1)}_k\left(\frac{k}{k+1}\right)\geq f_k^{(k-1)}(t) >
  \frac{1}{2}$, and by $(iv)$, we have
  $$
  \frac{1}{2} < f_k^{(k-1)}\left(\frac{k}{k+1}\right)
  \leq \left(\frac{1}{2}\right)^{\frac{k-1}{k}} \frac{k}{k+1} \ .
  $$
  An elementary calculation shows that this does not hold for any 
  $k \geq 1$.
\end{proof}


To prove Lemma~\ref{prop-alpha}, we compute $\lim_{k \rightarrow
  \infty} \frac{S_k}{k}$ (and show that the limit exists). Recall
the definition
$$
S_k = \min\{\ell \in \mathbb{N}_0 \ | \ f_k^{(\ell)}(t) \leq
\frac{1}{2} \ \forall t \in [0,1] \} \ ,
$$
where $f_k(t)=t\sqrt[k]{1-t}$.  By Part $(iii)$ of
Proposition~\ref{prop-ell}, $S_k =\min\{\ell \ | \
f_k^{(\ell)}(t^*_k) \leq \frac{1}{2}\}$, for $t^*_k := \frac{k}{k+1}$. We
generalize the definition of $S_k$ by defining for $t \in (0,1]$,
$$
S_k(t) := \min \{\ell \ | \ f_k^{(\ell)} (t^*_k) \leq t\} \ .
$$
Further, we set $s_k(t) := \frac{S_k(t)}{k}$.
 Let $0 < t_2 < t_1 < t^*_k$. We want to estimate
$s_k(t_2) - s_k(t_1)$. This should be small if $|t_1-t_2|$ is small.
For brevity, we write $a := S_k(t_1)$, $b := S_k(t_2)$. Clearly $a
\leq b$. We calculate
\begin{eqnarray*}
  t_2 & \geq & f_k^{(b)}(t^*_k) = f_k^{(b-a+1)} (f_k^{(a-1)}(t^*_k))
  \geq f_k^{(b-a+1)}(t_1) \geq (1-t_1)^{((b-a+1)/k)} t_1 \ ,\\
  t_2 &  <& f_k^{(b-1)}(t^*_k) = f_k^{(b-a-1)}(f_k^{(a)}(t^*_k))
  \leq f_k^{(b-a-1)}(t_1) \leq (1-t_2)^{((b-a-1)/k)} t_1 \ .
\end{eqnarray*}
Where we used part $(iv)$ of Proposition~\ref{prop-ell}.  In fact, these
inequalities also hold if $t_1 \geq t^*_k$, when $a=0$:
\begin{eqnarray*}
  t_2 & \geq & f_k^{(b)}(t^*_k) \geq (1-t^*_k)^{b/k}t_1 
  \geq (1-t_1)^{(b+1)/k}t_1\ , \\
  t_2 & < & f_k^{(b-1)}(t^*_k) = (1-t_2)^{((b-1)/k)} t_1 \ .
\end{eqnarray*}
One checks that the inequalities even hold if $t^* \leq t_2 < t_1 \leq
1$. Note that $\frac{b-a}{k}=s_k(t_2)-s_k(t_1)$.  Solving for
$\frac{b-a}{k}$, the above inequalities yield
\begin{eqnarray}
\frac{\log t_2 - \log t_1}{\log (1-t_1)} - \frac{1}{k} \leq s_k(t_2) -
s_k(t_1) \leq \frac{\log t_2 - \log t_1}{\log (1-t_2)}+ \frac{1}{k} \ ,
\label{ineq-sk}
\end{eqnarray}
for all $0 < t_2 < t_1 < 1$. The right inequality also holds for $0 <
t_2 < t_1 \leq 1$. Multiplying with $-1$, we see that it also holds if
$t_2 > t_1$. If $t_2 = t_1$, it is trivially true.
Hence this inequality is true for all $t_1,t_2 \in (0,1)$.\\

Suppose $s(t) =
\lim_{k\rightarrow\infty} s_k(t)$ exists, for every fixed $t$.
Inequality (\ref{ineq-sk}) also holds in the limit. Writing $t_1=t$
and $t_2 = t+h$ and dividing (\ref{ineq-sk}) by $h$ gives
\begin{eqnarray*}
  \frac{\log (t+h) - \log t}{h\log (1-t)} \leq 
  \frac{s(t+h) - s(t)}{h}
  \leq \frac{\log (t+h) - \log t}{h\log (1-t-h)}\ ,
\end{eqnarray*}
Letting $h$ go to $0$, we obtain $s'(t) = \frac{1}{t\log(1-t)}$, thus
$s(t) = s(1) - \int_t^1 \frac{1}{x\log(1-x)} dx$. Observing that
$\frac{S_k}{k} = s_k(\frac{1}{2})$ and $s_k(1)=0$ for all $k$
proves the Lemma. \\

The above argument shows that if $s_k(t)$ converges pointwise, then it
converges to a continuous function $s(t)$ on $(0,1)$. We have to show
that $\lim_{k\rightarrow \infty} s_k(t)$ does in fact exist.  First
plug in $t_1 = 1$ into the right inequality of (\ref{ineq-sk}) to
observe that for each fixed $t_2$, the sequence $(s_k(t_2))_{k \in
  \mathbb{N}}$ is bounded from above. Clearly it is bounded from below
by $0$. Hence there exist $\us(t) := \limsup s_k(t)$ and similarly
$\ls(t):= \liminf s_k(t)$. We write shorthand $L(t_1,t_2):=\frac{\log
  t_2 - \log t_1}{\log (1-t_1)}$ and $U(t_1,t_2):=\frac{\log t_2 -
  \log t_1}{\log (1-t_2)}$. Now (\ref{ineq-sk}) reads as 
$L(t_1,t_2)-\frac{1}{k}  \leq  s_k(t_2) - s_k(t_1) \leq
U(t_1,t_2)+\frac{1}{k}$. We claim that
\begin{eqnarray}
L(t_1,t_2) & \leq & \us(t_2) - \us(t_1) \leq U(t_1,t_2) \ , \label{ineq-us}\\
L(t_1,t_2) & \leq & \ls(t_2) - \ls(t_1) \leq U(t_1,t_2) \ . \label{ineq-ls}
\end{eqnarray}

For sequences $(a_k)_{k \in \mathbb{N}}$, $(b_k)_{k \in \mathbb{N}}$,
$\limsup a_k - \limsup b_k = \limsup (a_k - b_k)$ does not hold in
general, hence the claim is now completely trivial. We will proof that
$\us(t_2) -  \us(t_1) \leq U(t_1,t_2)$.  This will
prove one claimed inequality. The other three inequalities can be
proven similarly.  Fix some small $\epsilon > 0$.  For all
sufficiently large $k$, $\frac{1}{k} \leq \epsilon$.  We have
$s_k(t_2) \geq \us(t_2) - \epsilon$ for infinitely many $k$, thus
$s_k(t_1) \geq s_k(t_2) - U(t_1,t_2) - \frac{1}{k} \geq \us(t_2) -
U(t_1,t_2) - 2\epsilon$ for infinitely many $k$.  Therefore $\us(t_1)
\geq s(t_2) - U(t_1,t_2) - 2\epsilon$.  By making $\epsilon$
arbitrarily small, the claimed inequality follows.\\

We can now apply our non-rigorous argument from above, this time
rigorously. Write $t=t_1$, $t_2 = t+h$, and divide (\ref{ineq-us}) and
(\ref{ineq-ls}) by $h$, send $h$ to $0$, and we obtain $\us'(t) =
\ls'(t) = \frac{1}{t\log(1-t)}$. Since $\us(1)=\ls(1)=0$, we obtain
$$
\us(t) = \ls(t) = \int_t^1 \frac{-1}{x\log(1-x)}dx \ .
$$

\end{document}